\def\Box{\rule{2mm}{2mm}}
\newtheorem{claim}{Claim}
\newtheorem{lemma}{Lemma}
\newtheorem{theorem}{Theorem}
\newcommand{\junk}[1]{}
\newenvironment{proof}{\noindent {\it Proof.}}{\Box \vskip \belowdisplayskip}
\begin{document}

\title{An Approximate Generalization of the Okamura-Seymour Theorem \footnote{A preliminary version of this paper appeared in the proceedings of the conference Foundations of Computer Science (FOCS) 2022 held in Denver, USA. }}
\author{Nikhil Kumar \\University of Waterloo}
\date{}
\maketitle

\begin{abstract} 
We consider the problem of multicommodity flows in planar graphs. Okamura and Seymour \cite{okamura1981multicommodity} showed that if all the demands are incident on one face, then the cut-condition is sufficient for routing demands. We consider the following generalization of this setting and prove an approximate max flow-min cut theorem: for every demand edge, there exists a face containing both its end points. We show that the cut-condition is sufficient for routing $\Omega(1)$-fraction of all the demands. To prove this, we give a $L_1$-embedding of the planar metric which approximately preserves distance between all pair of points on the same face.
\end{abstract}
\section{Introduction}
Given a graph $G$ with edge capacities and multiple source-sink pairs, each with an associated demand, the multicommodity flow problem is to route all demands simultaneously without violating edge capacities. The problem was first formulated in the context of VLSI routing in the 70's and since then it has seen a long and impressive line of work.

The demand graph $H$ is the graph obtained by including an edge $(s_{i},t_{i})$ for a demand with source-sink $s_{i},t_{i}$. A necessary condition for the flow to be routed is that the capacity of every cut exceeds the demand across the cut. This condition is called the $\textbf{cut-condition}$ and is known to be sufficient when $G$ is planar and all the source-sink pairs are on one face \cite{okamura1981multicommodity} or when $G+H$ is planar \cite{seymour1981odd}. However, one can construct small instances where the cut-condition is not sufficient for routing flow. When $G$ is series-parallel, if every cut has capacity at least twice the demand across it, then flow is routable \cite{chakrabarti2008embeddings,gupta2004cuts}. The flow-cut gap of a certain graph class is the smallest $\alpha$ such that flow is routable when capacity of every cut is at least $\alpha$ times the demand across it. Thus, for series-parallel graphs, the flow-cut gap is 2. For general graphs, the flow-cut gap is $\Theta(\log k)$ \cite{Linial1995}, where $k$ is the number of demand pairs.

The flow-cut gap for planar graphs ($G$ planar, $H$ arbitrary) is $\mathcal{O}(\sqrt{\log n})$ \cite{rao1999small} and is conjectured to be $\mathcal{O}(1)$ \cite{gupta2004cuts}. Chekuri et al.~\cite{chekuri2006embedding} showed a flow-cut gap of $2^{\mathcal{O}(k)}$ for $k$-outerplanar graphs. Lee et al.~\cite{krauthgamer2019flow} made progress towards this conjecture by showing an $\mathcal{O}(\log h)$ bound on the flow-cut gap, where $h$ is the number of faces on which source-sink vertices are incident. Filtser \cite{filtser2020face} further improved this bound by showing a flow-cut gap of $\mathcal{O}(\sqrt{\log h})$, when all the source-sink vertices are incident on $h$ faces. In this paper, we consider instances where the source and sink of each demand lie on the same face, but all the source-sink pairs don't necessarily lie on a single face, and show that the flow-cut gap of such instances is $\mathcal{O}(1)$. It is well known that the cut-condition is not sufficient for such instances (see Figure \ref{intro_cc_not_suff}).
\begin{figure}[ht] \label{intro_cc_not_suff}
\centering
\includegraphics[width=6 in]{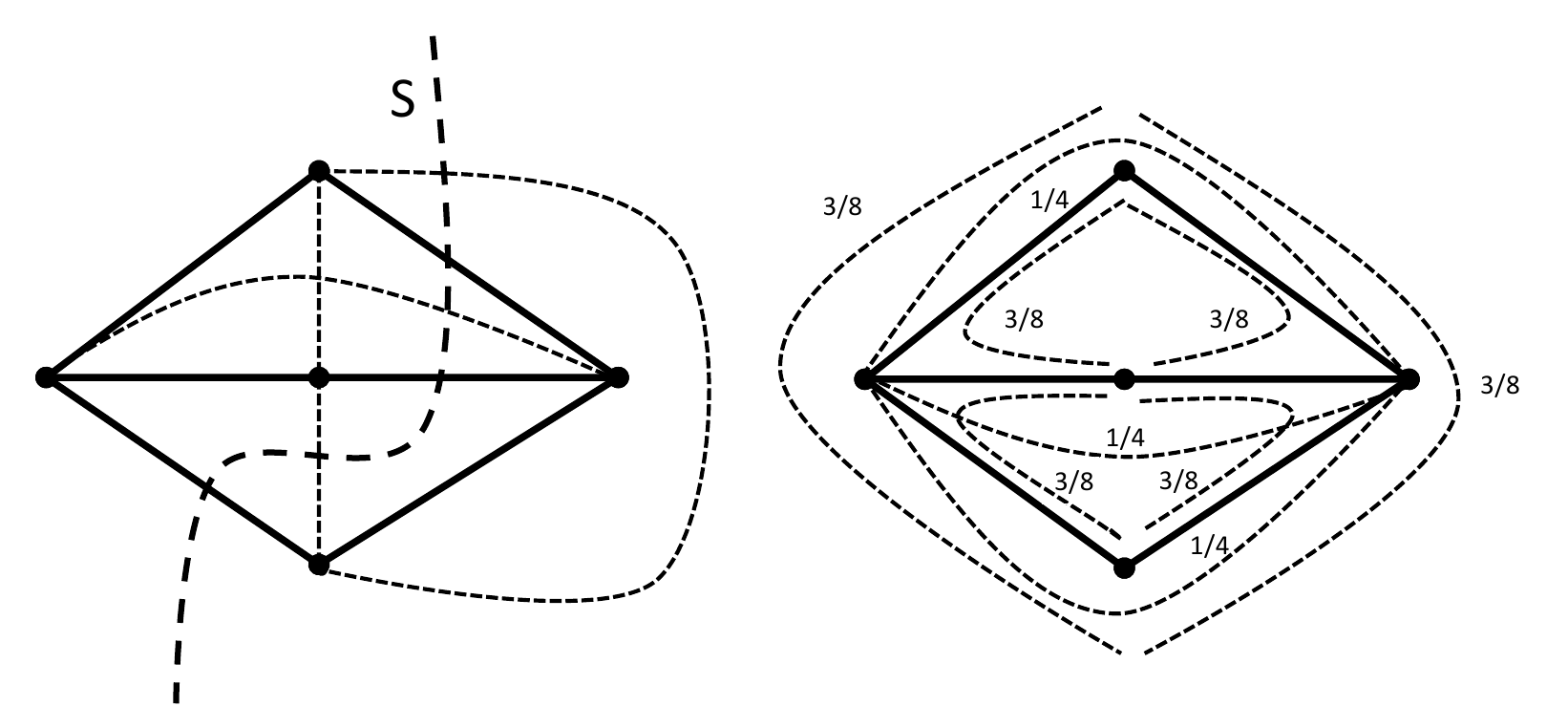}
\caption{This example first appeared in the work of Okamura and Seymour~\cite{okamura1981multicommodity}. All supply (solid) and demand (dashed) edges have value 1. $S$ (bold-dashed) is a cut. The total capacity of supply edges across $S$ is three while $S$ separates three units of demand; hence $S$ satisfies the cut-condition. One can check that no cut violates the cut-condition. Since the source-sink of every demand is distance two apart, a total capacity of $4 \cdot 2=8$ is required for a feasible routing, but only six are available. Hence, no feasible routing is possible. This also implies that no more than 3/4 of every demand can be routed simultaneously. The figure on the right shows a feasible routing of 3/4 of every demand, which implies a flow-cut gap of 4/3.}
\end{figure}

A common approach to establish bounds on the flow-cut gap is to bound the $L_{1}$ distortion incurred in embedding an arbitrary metric on the graph $G$ into a normed space. This, for instance, has been the method used to establish flow-cut gaps for general graphs \cite{Linial1995}, series-parallel graphs \cite{gupta2004cuts,chakrabarti2008embeddings} and planar graphs \cite{rao1999small}. We too build on this technique to prove our results (see Theorem \ref{Theorem : Main} and \ref{Theorem : Main_embedding}).

\section{Definitions and Preliminaries}
Let $G=(V,E)$ be a simple graph with edge capacities $c:E \rightarrow R_{\geq 0}$. We call this the supply graph. Let $H=(V,D)$ be a simple graph with demands on edges $d:D \rightarrow R_{\geq 0}$. We call this the demand graph. The objective of the $\bf{multicommodity}$ $\bf{flow}$ problem is to find paths between the end points of demand edges in the supply graph such that the following hold: for every demand edge $e \in D$, $d(e)$ paths are picked in the supply graph and every supply edge $e\in E$ is present in at most $c(e)$ paths.

We say that an instance is integrally feasible if paths satisfying the above two conditions can be found. If there exists an assignment of non-negative real numbers to paths such that total flow for every demand edge $e \in D$ is $d_{e}$ and total value of paths using a supply edge $e \in E$ is at most $c_{e}$, then we say that the instance is (fractionally) feasible. For $S \subseteq V$, a cut $(S,V \setminus S)$ is a bi-partition of the vertex set. We refer to the cut $(S,V \setminus S)$ by $S$ for convenience. The total capacity of supply edges of $G$ going across the cut $S$ is denoted by $\delta_{G}(S)$ (i.e.~the sum of capacities of supply edges having exactly one end-point in $S$). Similarly, $\delta_{H}(S)$ denotes the sum of demand values of demand edges going from $S$ to $V \setminus S$. One necessary condition for routing the flow (fractionally or integrally) is as follows: $\delta_{G}(S) \geq \delta_{H}(S)$ for every $S \subseteq V$. In other words, across every cut, total supply should be at least the total demand. This condition is also known as the \textbf{cut-condition}. In general, the cut-condition is not sufficient for a feasible routing. We can ask for the following relaxation: given an instance for which the cut-condition is satisfied, what is the maximum value of $\alpha$, such that $\alpha$ fraction of every demand can be routed? The number $\alpha^{-1}$ is known as the $\textbf{flow-cut gap}$ of the instance. There is an equivalent definition of the flow-cut gap: given an instance $(G,H)$ satisfying the cut condition, the smallest number $k$ such that $(kG,H)$ is feasible, where $kG$ denotes the graph with every edge capacity multiplied by $k$ (see Figure \ref{intro_cc_not_suff} for an illustration). We call $k$ the \textbf{congestion} of the multicomodity-flow instance $(G,H)$. The following classic result identifies a setting where the cut-condition is also sufficient for routing demands in planar graphs. We will be invoking these to prove our result.
\begin{theorem} [Okamura -Seymour~\cite{okamura1981multicommodity}]\label{OS}
If $G$ is a planar graph, all the edges of $H$ are restricted to a face and $G+H$ is Eulerian, then the cut-condition is necessary and sufficient for integral routing of all the demands.
\end{theorem}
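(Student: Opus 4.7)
My plan is to prove sufficiency (necessity is immediate) by induction on $\Phi(G,H) := |E(G)| + \sum_{e \in D} d(e)$, with the planar embedding of $G$ fixed so that every demand has both endpoints on the outer face $F$. The base case $\Phi = 0$ is vacuous. Throughout, the Eulerian hypothesis on $G+H$ gives a crucial parity fact: for every $S \subseteq V$, $\delta_G(S) + \delta_H(S)$ is even, so the slack $\delta_G(S) - \delta_H(S)$ is even as well. In particular, if no nontrivial cut is tight, then every proper cut has slack at least $2$.

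First I would dispose of the easy case in which there exists a nontrivial tight cut $\emptyset \ne S \subsetneq V$, meaning $\delta_G(S) = \delta_H(S)$. Using planarity together with the single-face hypothesis, I can pair the $\delta_G(S)$ supply edges across $S$ with the $\delta_H(S)$ demand edges across $S$ in a non-crossing cyclic order along $F$, and then split the instance into two subproblems (one on each side of $S$) by turning each matched pair into a new demand on the cut. Both subinstances remain planar, Eulerian, single-face, and satisfy the cut-condition, so induction resolves each side and the two routings glue along the matched pairs.

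The main case is when no nontrivial tight cut exists. Here I would pick any edge $e = uv$ on the boundary of $F$ and then select a demand pair $st$ whose endpoints lie on opposite arcs of $F \setminus \{u,v\}$, chosen to be ``innermost'' in a suitable nesting order along $F$. I would reduce to a smaller instance $(G',H')$ obtained by deleting $e$, decrementing $d(st)$ by one, and inserting two new unit demands $su$ and $vt$ along the $F$-arcs. A case analysis on which of $\{u,v,s,t\}$ lie on each side of an arbitrary cut $T$ shows that the slack can drop by at most $2$, so the at-least-$2$ bound above preserves the cut-condition; parity at $s,u,v,t$ preserves the Eulerian condition; and all new demand endpoints sit on $F$, preserving the single-face condition. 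Since $\Phi$ strictly decreases, induction yields a routing of $(G',H')$, and concatenating the $su$-path, the edge $e$, and the $vt$-path realises the routed unit of $st$.

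The main obstacle will be the selection of $st$ in this second case, and in particular showing that the innermost-demand choice guarantees that \emph{no} cut suffers the worst-case slack drop of $2$ at a place where its original slack was only $2$. This is exactly the minimum counterexample / uncrossing argument at the heart of Okamura and Seymour's proof: one assumes a violating cut $T$ exists, uses planarity of $G$ and the placement of demands on $F$ to uncross $T$ with the tight cut that $st$ is supposed to certify, and derives a strictly smaller counterexample, contradicting minimality. Some standard preprocessing (removing isolated vertices, contracting $0$-demand parallel pairs, and reducing to $2$-connected $G$) is needed to ensure that an innermost $st$ exists for the chosen $uv$; I expect these reductions to be routine but tedious.
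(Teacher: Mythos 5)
The paper does not actually prove this theorem; it states it with a citation to Okamura and Seymour, so there is no proof in the text to compare against. Measured against the standard argument (e.g.\ Schrijver's \emph{Combinatorial Optimization}, Theorem 74.2, or Frank's survey), the broad shape of your sketch is right — induction, the even-slack parity observation, delete an outer edge and replace one demand $st$ by the two ``broken'' demands $su$, $vt$ — but the way you have split the two cases inverts where the real work lies, and the genuinely hard step is the one you dismiss as easy.

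Your Case 1, splitting along a nontrivial tight cut $S$, is not a disposal. To split you must produce a bijection between the $\delta_G(S)$ supply edges and the $\delta_H(S)$ demand units across $S$ such that each subinstance still satisfies the cut-condition, remains Eulerian, and keeps all terminals on its outer face. Eulerianity goes through mechanically. The cut-condition does not: for a cut $T \subseteq S$, the original cut-condition applied to $T$ and to $T \cup (V\setminus S)$ gives $\delta_{G[S]}(T) \geq \delta_{H[S]}(T) + |a(T)-b(T)|$, where $a(T)$, $b(T)$ count the supply and demand cut edges whose $S$-endpoint lies in $T$; but the new demands contribute $c(T)$, which equals the number of matched pairs that $T$ splits, and $c(T) \leq |a(T)-b(T)|$ holds only if no two matched pairs straddle $T$ in opposite senses. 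Proving that a pairing with this monotonicity property exists for every $T$ — and even making sense of a ``common cyclic order'' for the supply cut edges (which cross the dual cycle in the interior) and the demand cut edges (which live as curves in the outer region) — is the content of a substantial splitting lemma, not a remark. The classical proof sidesteps all of this: it never splits on tight cuts; it uses a tight cut \emph{through the chosen edge $e$} to select the demand $st$.

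Your Case 2 has the opposite problem: you worry about the innermost choice of $st$ precisely where it is unnecessary. If no nontrivial tight cut exists, parity gives slack at least $2$ everywhere, your case analysis shows the modification lowers slack by at most $2$, and so the cut-condition survives for \emph{any} $st$ separating $u$ from $v$ along $F$. The uncrossing / minimal-counterexample machinery you anticipate belongs to the situation where $e$ lies in a tight cut $U$ — exactly the situation you shunted into Case 1. Two smaller points: ``opposite arcs of $F\setminus\{u,v\}$'' is vacuous when $u,v$ are adjacent on $F$ (one arc is empty), so you must allow $s=u$ or $t=v$ and suppress the resulting degenerate demand — which is also what makes your measure $\Phi=|E|+\sum d$ strictly decrease; and be aware that $\Phi$ would not decrease under the alternative move ``delete $e$, add a unit demand $uv$'' that some presentations use when $e$ lies in no tight cut, so your two moves are tied to your choice of potential. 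In sum: the plan is coherent in outline, but the load-bearing step is the tight-cut splitting lemma, and until that is proved carefully (or you merge the cases and let a tight cut through $e$ drive the choice of $st$, as Okamura and Seymour do) the argument is incomplete.
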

\begin{theorem} [Seymour~\cite{seymour1981odd}] \label{seymour}
If $G+H$ is planar and Eulerian, then the cut-condition is necessary and sufficient for integral routing of all the demands.
\end{theorem}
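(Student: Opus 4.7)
The plan is to exploit the planar duality of $G+H$. Since $G+H$ is planar and eulerian, its planar dual $(G+H)^*$ is bipartite, using the classical fact that a plane graph is eulerian if and only if its dual is bipartite. I would then translate the integral multicommodity routing in $G$ into a problem of packing edge-disjoint T-cuts in the bipartite planar graph $(G+H)^*$, and invoke a T-join/T-cut packing theorem to conclude.

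The first step is to make the duality precise. Each demand edge $e = s_it_i$ in $H$ corresponds to a dual edge $e^*$ sitting between two faces of $G+H$. Finding an integral $s_i$-$t_i$ path in $G$ corresponds to finding a cut in $(G+H)^*$ separating the two endpoints of $e^*$ and using only supply-dual edges, and packing $d(e)$ such paths for each demand edge corresponds to finding $d(e)$ pairwise edge-disjoint such cuts. The primal cut-condition $\delta_G(S) \geq \delta_H(S)$ dualizes, via the cut-cycle correspondence, to the statement that every cycle of $(G+H)^*$ contains at least as many supply-dual edges as demand-dual edges, which is precisely a T-join minimality condition with $T$ built from the demand structure.

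The second step is to invoke a packing theorem for T-cuts in bipartite graphs: in a bipartite graph, the minimum T-cut size equals the maximum number of pairwise edge-disjoint T-joins, and the equality holds integrally. Applied to $(G+H)^*$, this produces the required family of edge-disjoint supply-side cuts separating the dual-endpoints of each demand edge, which dualizes back to a valid integral routing in $G$. Since the eulerian hypothesis gives bipartiteness exactly where it is needed, the hypotheses line up cleanly.

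The main obstacle will be the careful bookkeeping of the correspondence: choosing the set $T$ correctly from the positions of the demand duals, handling demand multiplicities $d(e) \ge 1$ (e.g., by splitting each demand edge into $d(e)$ parallel copies while preserving planarity and the eulerian condition), and verifying that the min-max for T-joins delivers an \emph{edge-disjoint} packing and not merely a fractional one. The combinatorial heart of the matter is the T-join/T-cut theorem for bipartite graphs, which itself is nontrivial but classical; the rest is a notationally delicate but conceptually clean reformulation via planar duality.
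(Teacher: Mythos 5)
This is Seymour's theorem, which the paper only cites (no proof is given), so there is no in-paper argument to compare against; your outline should instead be measured against Seymour's original proof, and indeed it reproduces that route: dualize in the plane, use that $G+H$ eulerian makes $(G+H)^*$ bipartite, recognize the cut-condition as the optimality criterion certifying that the demand-dual edges form a minimum $T$-join (with $T$ the set of odd-degree endpoints of those dual edges, via the cycle-of-the-dual $\leftrightarrow$ minimal-cut-of-the-primal correspondence), and then pack $T$-cuts.

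Two corrections to the way you phrase the key step. First, the min-max you need is \emph{minimum $T$-join size equals maximum number of pairwise edge-disjoint $T$-cuts} in a bipartite graph; you wrote it the other way around (``minimum $T$-cut size equals the maximum number of pairwise edge-disjoint $T$-joins''), which is a different statement and not the one that produces the packing of cuts you rightly say you want in the very next sentence. Second, a cut in $(G+H)^*$ that separates the endpoints of $e^*$ must \emph{contain} $e^*$, so it cannot ``use only supply-dual edges''; the correct picture is that in a tight packing each $T$-cut contains exactly one demand-dual edge (this is forced by edge-disjointness together with every $T$-join meeting every $T$-cut), and it is the remainder of the dual cut, read back as a primal cycle through the demand edge, that supplies the $s_i$--$t_i$ path in $G$. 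With these repaired, the bookkeeping you flag (parallel copies for multiplicities and capacities, preserving planarity and parity) is exactly the remaining work.
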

In the above theorem statements, $G+H$ is Eulerian if the total value of supply and demand edges incident on any vertex is an even integer. Observe that if the Eulerian condition is not satisfied, then we obtain a half-integral flow (i.e.~all flow paths have a flow that is an integer multiple of 1/2), that can be computed in polynomial time. In this paper, we will not be concerned with (half) integral flows and focus on proving flow-cut gap for instances that generalize the setting of the above (see Theorem \ref{Theorem : Main}).

From now on, we assume a fixed planar embedding of $G$. Without loss of generality, one can assume that $G$ is 2-vertex connected. If there is a cut-vertex $v$ and $ab$ is a demand separated by removal of $v$, then replacing $ab$ by $av,vb$ maintains the cut-condition. By doing this for every cut-vertex and demand separated by them, we get separate smaller instances for each 2-vertex connected component. Hence, every vertex is a part of a cycle corresponding to some face. By our assumption, for every demand edge there exists a face such that both its end points lie on that face. Hence, we can associate every demand with a face. We abuse notation and use $f$ to also denote the edges and vertices associated with the cycle of face $f$. Given a set $S$, we denote the subgraph induced by vertices in $S$ as $G[S]$. We call a subset $A \subseteq V$ central if both $G[A]$ and $G[V-A]$ are connected. A cut $(A, V \setminus A) $ is called central if $A$ is central. The following is well-known.
\begin{lemma}[\cite{schrijver2003combinatorial}]\label{central}
$(G,H)$ satisfies the cut-condition if and only if $\delta_G(A) \geq \delta_H(A)$ for all the central sets $ A \subseteq V$.
\end{lemma}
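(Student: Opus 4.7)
The ($\Rightarrow$) direction is immediate since central sets are a special case of subsets. For ($\Leftarrow$), assume every central set satisfies the cut condition and fix an arbitrary $S \subseteq V$; the goal is to show $\delta_G(S) \ge \delta_H(S)$ by decomposing the cut $(S, V \setminus S)$ into a sum of central cuts in two stages.

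The basic accounting identity I will use is this: if $G[X]$ has connected components $X_1, \ldots, X_m$, then $\delta_G(X) = \sum_i \delta_G(X_i)$ (no supply edge runs between different components), while $\delta_H(X) \le \sum_i \delta_H(X_i)$ (a demand edge internal to $X$ whose endpoints lie in two different $X_i$'s adds to the right side but not the left, and a demand edge leaving $X$ is counted exactly once on each side). Stage 1 applies this to $S$ with components $S_1,\dots,S_k$, reducing the claim to proving $\delta_G(S_i) \ge \delta_H(S_i)$ for every $i$. Each $G[S_i]$ is connected; if moreover $G[V\setminus S_i]$ is connected, then $S_i$ is central and the hypothesis finishes it. Otherwise, Stage 2 applies the same identity to $V\setminus S_i$ (whose cut equals that of $S_i$) with components $T_1,\dots,T_l$, reducing further to proving the cut condition for each $T_j$.

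The heart of the argument is that every $T_j$ produced in Stage 2 is in fact central. Clearly $G[T_j]$ is connected by construction; for $G[V\setminus T_j]$, I will use the fact that $G$ is $2$-vertex-connected, hence connected. Each other component $T_{j'}$ of $G[V\setminus S_i]$ must have at least one neighbor outside of $V\setminus S_i$, and such a neighbor necessarily lies in $S_i$. Therefore the connected set $S_i$ is attached to every $T_{j'}$ inside $G[V\setminus T_j]$, merging all the pieces into a single connected subgraph and making $T_j$ central. I expect this merging step to be the only real obstacle: a single round of component-decomposition need not produce central pieces, and a crude count of components after two rounds does not obviously decrease. The point is that once the connected piece $S_i$ has been pulled out, the residual complement components \emph{must} all touch $S_i$, which is precisely what collapses them into one component of $G[V\setminus T_j]$ and closes the reduction.
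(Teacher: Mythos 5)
The paper cites this lemma from Schrijver without including a proof, so I am comparing your argument against the standard one in the literature, and it matches: the two-stage decomposition (first into components of $G[S]$, then each $S_i$ into components of $G[V\setminus S_i]$), together with the supply/demand accounting identity $\delta_G(X)=\sum_i\delta_G(X_i)$ and $\delta_H(X)\le\sum_i\delta_H(X_i)$, and the key observation that the second-stage pieces $T_j$ are automatically central because every other complement-component must send an edge into the connected set $S_i$, is exactly the textbook argument. Your proof is correct; the only cosmetic point is that you only need $G$ connected, not $2$-vertex-connected, though the paper's WLOG assumption gives you that for free.
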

The set of all faces of $G$ will be denoted by $F$. The $\bf{dual}$ of a planar graph, denoted by $G^{D}=(V^{D},E^{D})$, is defined as follows: $V^{D}= F$ and if $f_{i},f_{j} \in F$ share an edge in $G$, then $(f_{i},f_{j}) \in E^{D} $. It is a well known fact that edges of a central cut in $G$ correspond to a  simple circuit in $G^{D}$ and vice versa. Given a graph $G=(V,E)$ with edge-lengths $l:E \rightarrow \mathbb{R}_{\geq 0}$, we use $d_{G}(u,v)$ (if $l$ is clear from the context) and $l(u,v)$ (if $G$ is clear from the context) to denote the shortest path distance between $u$ and $v$ in $G$ w.r.t edge-length $l$. We now describe the connection between flow-cut gap and embedding vertices into normed spaces.
\subsection{Embedding Metrics into $L_1$}
Given an edge weighted graph $G=(V,E)$ with edge-lengths $l:E \rightarrow \mathbb{R}_{\geq 0}$, associated shortest path metric $d_{G}$, a graph $H=(V,D)$ and an embedding $z : V \rightarrow \mathbb{R}^{m}$, the contraction and expansion of $z$ are the smallest $\alpha$, $\beta$ respectively, such that $||z(u)-z(v)||_{1} \geq  d_{G}(u,v) / \alpha$ for all $(u,v) \in D$ and $ ||z(u)-z(v)||_{1} \leq \beta \cdot d_{G}(u,v)$ for all $(u,v) \in E$. The distortion of the embedding, $\texttt{dist}(G,H,z)$, is $\alpha \cdot \beta$. Given $G,H$, we are generally interested in finding an embedding with low distortion. If $H$ is a clique, we refer to $\texttt{dist}(G,H,z)$ simply as $\texttt{dist}(G,z)$ and call it the distortion of $G$ with respect to $z$. Linial et al.~\cite{Linial1995} built on the result of Bourgain \cite{bourgain1985lipschitz}, and gave a polynomial time algorithm that embeds any graph on $n$ vertices into $L_1$ with distortion $\mathcal{O}(\log n)$. Furthermore, this result is asymptotically the best possible, as there exist instances for which any embedding into $L_1$ has distortion $\Omega(\log n)$. There is a rich literature on finding low distortion embeddings for special graph classes. It is well known that a tree can be embedded into $L_1$ with distortion 1, outerplanar graphs with distortion 1 \cite{okamura1981multicommodity}, $k$-outerplanar graphs with distortion $2^{\mathcal{O}(k)}$ \cite{chekuri2006embedding}, series-parallel graphs with distortion 2 \cite{chakrabarti2008embeddings}. Rao \cite{rao1999small} showed that any planar metric can be embedded into $L_1$ with distortion $\mathcal{O}(\sqrt{\log n})$ and there has essentially been no improvement upon this in the last two decades. It is conjectured that any planar graph can be embedded into $L_1$ with distortion $\mathcal{O}(1)$ \cite{gupta2004cuts} (also known as the \texttt{GNRS-Conjecture}). In this paper, we make progress towards this conjecture. Given a drawing of a planar graph in the plane, let $H$ be the set of all pairs of vertices $(u,v)$ such that $u$ and $v$ lie on the same face. We show the existence of a polynomial time computable $z:V \rightarrow \mathbb{R}^{t}$ such that $\texttt{dist}(G,H,z)=\mathcal{O}(1)$. In this paper, we will work exclusively with the 1-norm, so we drop the subscript and denote $||z(u)-z(v)||_1$ simply as $||z(u)-z(v)||$. We say that an embedding is non-expansive if $\beta=1$. The dimension of an embedding is not of primary concern to us in this work, therefore we will use $z:V \rightarrow L_1$ to denote the fact that there is a mapping of elements of $V$ into $\mathbb{R}^t$ for some natural number $t$. By scaling, we may convert any $L_1$ embedding into a non-expansive one. Given two embeddings $z : V \rightarrow \mathbb{R}^{m}$ and $w: V \rightarrow \mathbb{R}^{l}$ and non-negative real numbers $\alpha, \beta$, we denote by $\alpha \cdot z + \beta \cdot w $ the embedding obtained by concatenating $z$ and $w$ after scaling each of their coordinates by $\alpha$ and $\beta$ respectively. More formally, if $z(u)=(z_1,z_2,\ldots, z_m)$ and $w(u)=(w_1,w_2,\ldots,w_l)$ for $ u \in V$, then $(\alpha \cdot z + \beta \cdot w) (u)=(\alpha \cdot z_1,\alpha \cdot z_2,\ldots, \alpha \cdot z_m, \beta \cdot w_1,\beta \cdot w_2,\ldots,\beta \cdot w_l)$. 
\subsection{Flow-Cut Gap and Embedding into $L_1$}
Flow-cut gaps and embedding metrics into $L_1$ are intimately related. This connection was first observed by Linial et al.~\cite{Linial1995}, who used it to prove flow-cut gap results. We describe this connection formally now. Let $G=(V,E), H=(V,D)$ be fixed graphs and $c:E \rightarrow \mathbb{R}_{\geq 0}$, $d:D \rightarrow \mathbb{R}_{\geq 0}$ and $l:E \rightarrow \mathbb{R}_{\geq 0}$ denote capacity, demand and lengths of respective edge sets. Let $\mathcal{I}$ be the set of all multicommodity flow instances $G=(V,E,c),H=(V,D,d)$ for which the cut-condition is satisfied. Note that we construct $\mathcal{I}$ by keeping $G,H$ fixed and considering all possible values of $c$ and $d$ for which the cut-condition is satisfied. Let $\texttt{cong}(G,H)$ denote the maximum congestion required for routing any multicommodity flow instance in $\mathcal{I}$. Let $G_l$ denote the graph $G$ with edge-length $l$. Let $\texttt{dist}(G,H)$ be the minimum number such that for every $l$, there exists a $f$ such that $\texttt{dist}(G_l,H,f) \leq \texttt{dist}(G,H)$. The $\textbf{congestion-distortion theorem}$ states that $\texttt{cong}(G,H)=\texttt{dist}(G,H)$. See Section 3 of \cite{chakrabarti2012cut} for a simple proof of this fact using linear programming duality. This connection has been exploited extensively to prove flow-cut gap results for general graphs \cite{Linial1995}, series-parallel graphs \cite{chakrabarti2008embeddings, gupta2004cuts} and planar graphs \cite{rao1999small}. All these results proceed by showing the existence of a low distortion embedding of the corresponding metric into $L_1$. Using the congestion-distortion theorem, we now restate the theorem of Okamura and Seymour~\cite{okamura1981multicommodity} and Seymour~\cite{seymour1981odd} in terms of metric embedding.

\begin{theorem} [Okamura-Seymour\cite{okamura1981multicommodity}]\label{OS_embedding}
Let $G=(V,E)$ be a planar graph with edge-lengths $l:E \rightarrow \mathbb{R}_{\geq 0}$ and $t \in F$ be one of its faces. Then there exists an embedding $z: V \rightarrow L_1$ such that $||z(u)-z(v)||=d_{G}(u,v)$ for all $u,v \in t$ and $||z(u)-z(v)|| \leq d_{G}(u,v)$ for all $(u,v) \in E$.
\end{theorem}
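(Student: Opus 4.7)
The strategy is to reduce the embedding claim to the flow version of Okamura–Seymour (Theorem~\ref{OS}) via the congestion–distortion equivalence described above. Fix $G$ and let $H=(V,F)$ be the complete graph on the vertex set of the face $t$; then every edge of $H$ lies on $t$. The congestion–distortion theorem gives $\texttt{dist}(G,H)=\texttt{cong}(G,H)$, so to produce the desired embedding it suffices to prove $\texttt{cong}(G,H)=1$: for every capacity/demand pair $(c,d)$ on $(G,H)$ satisfying the cut-condition, all demands must be fractionally routable within the capacities $c$.

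To establish this, I would take an arbitrary cut-condition instance $(c,d)$ and pass to the doubled instance $(2c,2d)$. Doubling preserves the cut-condition, and every vertex degree in $2G+2H$ is even, so the doubled instance is Eulerian. All demand edges still lie on $t$, so Theorem~\ref{OS} applies and produces an integral routing of $2d$ within capacities $2c$; halving each routed path yields a fractional routing of the original $(c,d)$. (Irrational capacities are handled first by rescaling to rationals and then by a continuity/compactness argument on the flow polytope.) Hence $\texttt{cong}(G,H)=1$.

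The congestion–distortion theorem now yields, for every length function $l$, an embedding $f$ with $\texttt{dist}(G,H,f)\leq 1$, i.e., a contraction $\alpha$ on $H$-pairs and an expansion $\beta$ on $E$ whose product is at most $1$. Chaining the edge inequality $||f(u)-f(v)||\leq \beta\cdot l(u,v)$ along a shortest $u$–$v$ path in $G$ gives $||f(u)-f(v)|| \leq \beta\cdot d_G(u,v)$ for all $u,v$, and comparing with the contraction bound on a pair of $t$ forces $\alpha=\beta=1$. Therefore $||f(u)-f(v)||=d_G(u,v)$ for every $u,v\in t$ and $||f(u)-f(v)|| \leq d_G(u,v) \leq l(u,v)$ for every edge $(u,v)\in E$, matching the theorem.

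The only serious input is Theorem~\ref{OS}; the doubling trick and the congestion–distortion packaging are routine. If the black-box route hid some subtlety, my fallback would be to build $f$ explicitly as a convex combination of the central ``face cuts'' of $G$ (cuts whose dual cycle uses at most two edges of $t$), with weights certified by the fractional OS theorem — but I expect the congestion–distortion argument above to suffice cleanly.
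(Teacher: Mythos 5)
Your proposal is correct and takes precisely the route the paper intends: the paper presents Theorem~\ref{OS_embedding} as a direct restatement of Theorem~\ref{OS} via the congestion--distortion theorem, and you supply the expected supporting details (the doubling trick to meet the Eulerian hypothesis and the rational/limit handling). One small imprecision: distortion $\le 1$ together with your chaining bound gives $\alpha\beta=1$ rather than $\alpha=\beta=1$ outright, but this is resolved by rescaling $f$ to a non-expansive embedding, which the paper already notes is always possible.
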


\begin{theorem} [Seymour\cite{seymour1981odd}] \label{seymour_embedding}
Let $G=(V,E)$ be a planar graph with edge-lengths $l:E \rightarrow \mathbb{R}_{\geq 0}$ and $H=(V,T)$ be a demand graph such that $G+H$ is planar. Then there exists an embedding $z: V \rightarrow L_1$ such that $||z(u)-z(v)||=d_{G}(u,v)$ for all $(u,v) \in T$ and $||z(u)-z(v)|| \leq d_{G}(u,v)$ for all $(u,v) \in E$.  
\end{theorem}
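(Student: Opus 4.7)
My plan is to derive this embedding statement from the combinatorial Theorem \ref{seymour} by invoking the congestion-distortion theorem recalled in Section 2.2. First, I would reinterpret the conclusion: an embedding $f$ satisfying $||f(u)-f(v)|| = d_G(u,v)$ on every demand edge $(u,v) \in T$ and $||f(u)-f(v)|| \leq l(u,v)$ on every supply edge $(u,v) \in E$ is exactly an $L_1$ embedding with expansion at most $1$ and contraction at most $1$, i.e. $\texttt{dist}(G,H,f) \leq 1$. The expansion bound on supply edges extends by the triangle inequality along a shortest path to $||f(u)-f(v)|| \leq d_G(u,v)$ for \emph{all} pairs, which combined with the contraction bound on demand edges forces equality on those edges for free. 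So producing the required $f$ is equivalent to proving $\texttt{dist}(G,H) \leq 1$ for the given length function, up to a harmless final rescaling that normalises the expansion to $1$.

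Next, by the congestion-distortion theorem, $\texttt{dist}(G,H) = \texttt{cong}(G,H)$, so the task reduces to showing that every capacity/demand assignment on $(G,H)$ whose cut-condition is satisfied admits a feasible fractional routing, i.e.\ $\texttt{cong}(G,H) \leq 1$. This is the fractional analogue of Theorem \ref{seymour}. Although that theorem as stated also requires $G+H$ to be eulerian, this assumption is removed in the fractional setting by the standard doubling trick: I would scale all capacities and demands by $2$, apply the integral version of Seymour's theorem to the resulting eulerian instance, and then scale the routing back by $1/2$ to obtain a feasible half-integral, hence fractional, routing of the original instance.

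The main thing to verify is a consistency check rather than a genuine obstacle: that planarity of $G+H$ is a property of the graphs alone and is therefore preserved under the doubling trick, so the class of instances quantified over in $\texttt{cong}(G,H)$ is the same one to which Seymour's theorem applies; and that the final rescaling of $f$ preserves equality on demand edges (which it does, since scaling preserves ratios between norms and shortest-path distances). With these observations in place, the chain $\texttt{dist}(G,H) = \texttt{cong}(G,H) \leq 1$ yields, for any length function $l$ and in particular the one given in the theorem, an embedding of the desired form.
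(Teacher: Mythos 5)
Your proposal is correct and matches the paper's intended route: the paper introduces Theorem~\ref{seymour_embedding} explicitly as a restatement of Theorem~\ref{seymour} via the congestion-distortion theorem, and leaves implicit exactly the steps you fill in (the doubling trick to discharge the eulerian hypothesis, noting that planarity of $G+H$ is unaffected, and the normalization to expansion $1$ which by the triangle inequality forces equality on demand edges). No gap.
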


\subsection{Cut Metrics and $L_1$ Embedding} \label{Cut-Metric-Embedding Equivalence}

Suppose we have a set of cuts with non-negative weights $\mathcal{C}=\{(C_1,w_1),\ldots,(C_k,w_k)\}$. Define $C_i(u,v)$ to be $w_i$ if exactly one of $u,v$ is contained in $C_i$ and 0 otherwise. Let $\mathcal{C}(u,v)=\sum_{i=1}^{k}C_i(u,v)$. It is easy to verify that $\mathcal{C}$ induces a metric on $V$. We refer to the metric induced by $\mathcal{C}$ as a \textbf{cut-metric}. One can construct $f:V \rightarrow \mathbb{R}^{k}$ such that $\forall u,v$ we have $||f(u)-f(v)||_1=\mathcal{C}(u,v)$ as follows: for any vertex $u$, define $f(u)=(u_1,u_2,\ldots,u_k)$ where $u_i=w_i$ if $u \in C_i$, 0 otherwise. In fact, converse of the above is also true: given any embedding of vertices into $L_1$, there exists a set of weighted cuts $\mathcal{C}$ such that the distance metric induced by $\mathcal{C}$ is equal to the distance metric induced by the $L_{1}$ embedding (see Lemma 15.2 of \cite{williamson2011design} for a proof). Hence, to show a low distortion $L_1$ embedding of a metric, it is equivalent to show a collection of cuts which preserve distances with low distortion. Using the aforementioned equivalence of the cut-metric and $L_1$ embedding, we use them interchangeably from now on. Given a non-negative real number $\alpha$ and a collection of weighted cuts $\mathcal{C}$, $\alpha \cdot \mathcal{C}$ denotes the the same collection of cuts with the weight of all cuts scaled by a multiplicative factor of $\alpha$.

\section{Our Contribution}
We generalize the result of Okamura and Seymour \cite{okamura1981multicommodity} and prove the following approximate max flow-min cut theorem:
\begin{theorem} \label{Theorem : Main}
Let $G$ be an edge-capacitated planar graph and $H$ be a set of demand edges such that for each $(u,v) \in H$, there exists a face $f$ containing both $u$ and $v$. If the cut-condition is satisfied, then there exists a feasible routing of $\Omega(1)$-fraction of all the demands.
\end{theorem}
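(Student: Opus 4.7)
The plan is to prove Theorem \ref{Theorem : Main} via the congestion-distortion theorem of Section 2.2: it suffices to show that for every edge-length function $l:E\to\mathbb{R}_{\ge 0}$ on $G$, the shortest-path metric of $(G,l)$ admits an $L_1$-embedding $\phi:V \to L_1$ that is non-expansive on every edge of $G$ and has contraction $O(1)$ on every pair of vertices lying on a common face. Using the cut/\,$L_1$ equivalence of Section \ref{Cut-Metric-Embedding Equivalence}, this is the same as exhibiting a weighted collection $\mathcal{C}$ of cuts such that every edge $e\in E$ is crossed by cuts of total weight $O(l(e))$, while every same-face pair $(u,v)$ is separated by cuts of total weight $\Omega(d_G(u,v))$.

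The first step is to produce face-local cut collections. For each face $f$, Theorem \ref{OS_embedding} supplies a weighted cut collection $\mathcal{C}_f$ which is non-expansive on all edges of $G$ and \emph{exactly} realizes $d_G(u,v)$ for every pair $u,v$ on the facial cycle of $f$. By Lemma \ref{central} together with the simple-dual-cycle characterisation of central cuts, each cut in $\mathcal{C}_f$ can be taken to be central and is described by a simple cycle of the dual $G^*$ passing through the dual vertex $f^\ast$; equivalently, the cut crosses the facial cycle of $f$ in exactly two edges. The naive union $\bigcup_f \mathcal{C}_f$ immediately realises the required contraction on every same-face pair, but its total weight on a single primal edge can be as large as $\Theta(|F|)\cdot l(e)$, so the whole content of the theorem is an aggregation lemma that combines the $\mathcal{C}_f$ into a single collection $\mathcal{C}$ with edge congestion $O(l(e))$ and only a constant loss of contraction.

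My plan for that aggregation is to look for per-face scalings $\lambda_f=\Omega(1)$ together with local surgery on the dual cycles realising each $\mathcal{C}_f$, so that for every primal edge $e$ only $O(1)$ faces contribute more than a constant fraction of $l(e)$ worth of cut weight through $e$. Since each primal edge lies on exactly two facial cycles, the natural division is that cuts of $\mathcal{C}_f$ whose dual cycle enters $f^\ast$ along an edge incident to $e^\ast$ are charged to $f$ itself, while cuts whose dual cycle leaves the neighbourhood of $e^\ast$ are charged, via a planar-duality double counting, to other edges of the facial cycle of $f$. An alternative approach, if the deterministic scheme proves too rigid, is a randomised one-face-per-scale selection in a hierarchical dual decomposition in the spirit of Klein--Plotkin--Rao/Rao, using the fact that both endpoints of each demand share a face to suppress the $\sqrt{\log n}$ factor from Rao's argument down to $O(1)$.

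The main obstacle I anticipate is exactly this aggregation: showing that superposing $\Theta(|F|)$ distortion-$1$ face embeddings can be arranged with only a constant loss in edge congestion. Once such an aggregated cut collection $\mathcal{C}$ is in hand, the embedding it defines has constant distortion on all same-face pairs, and the congestion-distortion theorem applied to the demand graph $H$ of Theorem \ref{Theorem : Main} yields the desired feasible routing of an $\Omega(1)$ fraction of every demand.
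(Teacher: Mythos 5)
Your reduction to a cut--metric/$L_1$ embedding statement via the congestion--distortion theorem is correct and is exactly the route the paper takes (Theorem \ref{Theorem : Main_embedding}). The gap, however, lies in the one step you flag as the ``whole content of the theorem'': the aggregation. You propose to superpose the per-face Okamura--Seymour cut collections $\mathcal{C}_f$ and then repair the congestion by ``per-face scalings $\lambda_f = \Omega(1)$'' and ``local surgery on the dual cycles.'' This is not a proof step; it is a restatement of the difficulty, and there is good reason to believe it cannot be carried out by local modifications alone. In a large grid with unit edge lengths, for instance, the cuts realising the metric on a single unit square face are (dual) paths from boundary to boundary; summing the independently produced $\mathcal{C}_f$ over all $\Theta(n^2)$ faces puts $\Theta(n)$ weight on a typical edge. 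Truncating those dual paths ``near $f$'' destroys the property that each $C_i$ is a cut of $G$ (a closed dual cycle), and there is no obvious way to close them up again without either re-introducing the congestion elsewhere or losing the contraction on the face pairs. The charging scheme you sketch (cuts leaving the neighbourhood of $e^\ast$ are charged to other facial edges) has no accounting argument behind it, and the alternative ``randomised one-face-per-scale KPR'' idea is also only a heuristic; Rao's argument already uses KPR and pays a $\sqrt{\log n}$ factor, and you have given no mechanism by which restricting to same-face pairs suppresses it to $O(1)$.

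The paper's actual resolution is structurally different and does not rely on combining independent per-face OS embeddings at all. It first classifies faces as geodesic or non-geodesic and observes (Lemma \ref{Lemma: face_support_laminar}) that the minimal ``support regions'' $R(S_f)$ of the non-geodesic faces are laminar, which sets up an induction from innermost supports outward. To make that induction go through, the edge lengths are first perturbed to an $\alpha$-good length function (Theorem \ref{Theorem:alpha_good_decomposition}), ensuring each innermost support is $\alpha$-loose; geodesic pairs are handled by a single planar $G+T_p$ application of Seymour's theorem plus a constant number of ``separable'' instances (Theorems \ref{Theorem: Separable Instances}, \ref{Theorem: embed_geodesic_pairs}); pairs whose shortest path is forced through the boundary cycle are handled by the single-source embedding built from KPR padded decompositions (Theorem \ref{Theorem: single_source_embedding}); and the extension of a boundary cut-metric into the interior of a support cycle is done via a multicommodity flow in the planar dual and Okamura--Seymour (Lemma \ref{Lemma:extend_face_to_rest}, Theorem \ref{Theorem:extend_the_embedding}). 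None of this machinery---the laminar structure, $\alpha$-good lengths, the geodesic/non-geodesic dichotomy, or the single-source tool---appears in your proposal, and without something playing their role the aggregation step remains open. I would not accept the proposal as written.
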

Using the congestion-distortion theorem, Theorem \ref{Theorem : Main} can be restated in terms of metric embedding as follows:
\begin{theorem} \label{Theorem : Main_embedding}
Let $G=(V,E)$ be a planar graph with edge-lengths $l \rightarrow \mathbb{R}_{\geq 0}$ and $T$ be pairs of vertices $(u,v)$ such that both $u$ and $v$ lie on the same face. Then there exists an absolute constant $c>1$ and an embedding $z: V \rightarrow L_1$ such that $||z(u)-z(v)|| \geq d_{G}(u,v)/c$ for $(u,v) \in T$ and $||z(u)-z(v)|| \leq d_G(u,v)$ for $(u,v) \in E$.
\end{theorem}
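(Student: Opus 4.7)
By the cut-metric/$L_1$ equivalence of Section~\ref{Cut-Metric-Embedding Equivalence} (and Lemma~\ref{central}, which lets us restrict attention to central cuts), Theorem~\ref{Theorem : Main_embedding} reduces to exhibiting a weighted collection of central cuts $\mathcal{C}$ such that the total weight of cuts cutting any edge $e$ is at most $l(e)$, while $\delta_\mathcal{C}(u,v)\ge d_G(u,v)/c$ for every same-face pair $(u,v)\in T$ and some absolute constant $c$. My plan is to build $\mathcal{C}$ as a weighted combination of per-face Okamura--Seymour cut collections and then argue that no edge is overloaded.

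Concretely, for each face $f$ of $G$ apply Theorem~\ref{OS_embedding} to get a collection $\mathcal{C}_f$ of central cuts that exactly preserves $d_G$ on pairs of vertices in $f$ and is non-expansive on $E$. The natural attempt $\mathcal{C}=\sum_f (1/c)\cdot\mathcal{C}_f$ immediately yields the contraction requirement on $T$, and reduces the non-expansion requirement to showing $\sum_f w_f(e)=O(l(e))$ for every $e$, where $w_f(e)$ denotes the total weight of cuts in $\mathcal{C}_f$ containing $e$.

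The main obstacle is that an OS cut for face $f$ is a central cut of all of $G$ and can use edges arbitrarily far from $f$, so $\sum_f w_f(e)$ can grow with the graph size---already in a high-degree fan of triangles each triangle contributes the cut $\{v_0\}$, overloading the spoke edges. To remedy this I would replace each $\mathcal{C}_f$ by a \emph{localized} version: carve a connected sub-region $R_f\supseteq f$ of the planar embedding, contract the rest of $G$ while installing compensating edge lengths on the new boundary so that no pair of vertices on $f$ is shortened by the contraction, apply Theorem~\ref{OS_embedding} on the resulting smaller planar graph, and lift the cuts back to $G$ so that every cut only uses edges of $R_f$. If the regions $\{R_f\}$ are chosen so that each edge of $G$ lies in $O(1)$ of them---for instance using a planar cycle separator or a dual BFS layering---then non-expansiveness of OS on each contracted graph yields $\sum_f w_f(e)=O(l(e))$ and the uniform scaling by $1/c$ closes the argument. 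The delicate step I expect to fight with is calibrating the compensating boundary lengths so that every $u$-$v$ shortest path in the contracted graph (with $u,v\in f$) is still $\Omega(d_G(u,v))$; this amounts to a triangle-inequality argument exploiting that the ``outside'' of $R_f$ only needs to provide boundary-to-boundary detours.
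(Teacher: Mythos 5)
Your opening reduction (cut-metric/$L_1$ equivalence, per-face Okamura--Seymour embeddings, uniform rescaling) is the right starting point, and you correctly identify the central obstacle: summing per-face OS cut families overloads edges that appear in many supports. But the fix you propose --- carving regions $R_f\supseteq f$ with $O(1)$ overlap and contracting their complements --- does not go through, and the ``delicate step'' you flag is not a triangle-inequality afterthought; it is where essentially all of the work lives.

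The specific difficulty is that the metric supports of faces (the minimal region $\overline{R(S_f)}$ containing a shortest-path support of $f$, as in Section~\ref{Section:laminar_struc_face_support}) form a \emph{laminar} family (Lemma~\ref{Lemma: face_support_laminar}), and a laminar family can have nesting depth $\Theta(n)$: an edge near the innermost region may lie in $\Theta(n)$ of the support regions. You cannot take $R_f$ smaller than $\overline{R(S_f)}$ without destroying distances on $f$ once the complement is contracted, and a planar separator or dual BFS layering is combinatorial and has no reason to respect the metric-induced supports, so it cannot rescue bounded overlap. Consequently, any \emph{additive} scheme (summing localized $\mathcal{C}_f$'s) seems doomed by the nesting depth. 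Moreover, installing ``compensating boundary lengths'' on $\partial R_f$ so that $f$'s metric survives contraction has to be done coherently across all $f$ simultaneously, which your additive plan does not provide for.

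The paper instead proceeds \emph{inductively} along the laminar order: it rescales $l$ to an $\alpha$-good length function (Theorem~\ref{Theorem:alpha_good_decomposition}), which guarantees that every support cycle is $\alpha$-loose; it then removes the interior of an innermost support $S_f$, recursively embeds the smaller graph, and extends the resulting cut metric on $S_f$ inward via Lemma~\ref{Lemma:extend_face_to_rest} and Theorem~\ref{Theorem:extend_the_embedding}. The reason distortion does not compound across $\Theta(n)$ levels of nesting is precisely the $\alpha$-looseness: in Theorem~\ref{Theorem:extend_the_embedding} the outer cut-metric is extended inward in a graph whose interior edges are shortened by $\alpha$, so its ``leakage'' is $O(1/\alpha)$, and this is offset by an independently built inner embedding (assembled from the geodesic-pair embedding of Theorem~\ref{Theorem: embed_geodesic_pairs} and the single-source embedding of Theorem~\ref{Theorem: single_source_embedding}), with $\alpha\geq 12\beta$ chosen to make the cancellation go through. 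None of these ingredients appear in your sketch, and without them the core difficulty --- extending an embedding inward without accumulating error --- remains open.
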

We now give a brief overview of our approach. As mentioned before, we work with a fixed embedding of the given planar graph in the plane. We call a face $f \in F$ \textbf{geodesic} if for all $u,v \in f$, $d_{G}(u,v)$ is equal to the shortest path distance between $u,v$ using only the edges of the cycle corresponding to $f$. A face which is not geodesic is called \textbf{non-geodesic}. Given a cycle $S$, let $R(S)$ (resp.~$R^{\circ}(S)$) be the closed (resp.~open) region bounded by $S$ in the plane. Given a  face $f$, let $S^{\infty}_{f}$ be a minimal cycle such that for any $u,v \in R(S^{\infty}_{f})$, there exists a shortest $u,v$ path contained completely inside $R(S^{\infty}_{f})$. If $f$ is a geodesic face, then $S_f^{\infty}$ is exactly the cycle bounding the face, i.e.~$S_f^{\infty}=f$ (see Section \ref{Section:laminar_struc_face_support} for precise definitions).

In Section \ref{Section:laminar_struc_face_support}, we show that the set $\{R(S^{\infty}_f)| f \in F\}$ forms a laminar (or non-crossing) set system. This implies that there is a non-geodesic face $f$ with minimal $R(S^{\infty}_f)$, i.e.~for any non-geodesic face $f'\neq f$ either $R(S^{\infty}_f) \subseteq R(S^{\infty}_{f'})$ or $R^{\circ}(S^{\infty}_f) \cap R^{\circ}(S^{\infty}_{f'})= \emptyset$. We inductively find a suitable embedding of the graph obtained by removing the vertices in the interior of $R(S^{\infty}_f)$ and show how to extend the embedding to include all the vertices contained in $R(S^{\infty}_f)$. This extension argument turns out to be non-trivial and forms the core of our proof. To do this extension, we need to develop several new tools. In Section \ref{Section:laminar_struc_face_support}, we give an algorithm to modify the original length function so as to allow a nice inductive decomposition. We call such length functions $\alpha$-good and believe that this could be a useful tool in proving flow-cut gaps for other planar instances as well. In Section \ref{Section: embed_geodesic_pairs}, we come up with an embedding for all geodesic pairs of vertices, i.e.~pairs of vertices for which there exists a shortest path using only the edges on the corresponding face. In Section \ref{Section:single_source}, we develop a low distortion embedding for all pairs of vertices whose shortest path uses a fixed vertex of the graph. We believe that this problem is interesting in its own right and could prove to be useful in other settings. In Sections \ref{Section:constrained_embedding} and \ref{Section:final}, we combine the tools developed in previous sections to complete the proof of main theorem. 

\section{Laminar Structure of Face Supports}  \label{Section:laminar_struc_face_support}
We partition the of faces of planar graph $G$ into two sets: $\bf{geodesic}$ and $\textbf{non-geodesic}$. A face $f$ is called geodesic if for all $u,v \in f$, there exists a shortest path between $u$ and $v$ using only the vertices of the cycle associated with $f$. A face that is not geodesic is called a non-geodesic face. Let $f_I$ be the infinite face in the planar embedding of $G$. Let $F_N$ denote the set of all the non-geodesic faces except $f_I$. Let $F_G$ be the set of all the geodesic faces.

A set of edges $E_S \subseteq E$ is called a \textbf{support} of $S \subseteq V$ if for all $u,v \in S$, $d_{G_S}(u,v)=d_{G}(u,v)$, where $G_S=(V,E_S)$. In other words, restricting to $E_S$ does not change the induced shortest path metric on $S$. If $d_{G}(u,v) < l(u,v)$, then we may remove the edge $(u,v)$ without changing the shortest path metric on $G$. Hence we may assume that $d_{G}(u,v)=l(u,v)$ for all $(u,v) \in E$.  Given a cycle $C$, let $R^{\circ}(C)$ (resp.~$R(C)$) denote the open (resp.~closed) region contained inside the cycle $C$ in the planar embedding of $G$. We choose $R^{\circ}(C)$ and $R(C)$ such that it does not contain the infinite face $f_I$. For a cycle $C$, we overload notation and use $R(C)$ (resp.~$R^{\circ}(C)$) to denote the set of vertices contained inside $R(C)$ (resp.~$R^{\circ}(C)$). Note that the vertices of $C$ are a part of $R(C)$ but not $R^{\circ}(C)$. For $S \subseteq V$, let $G[S]$ denote the induced graph on $S$ and $G \setminus S$ denote the induced graph on $V\setminus S$.



Let $S^{1}_f$ be a cycle such that $R(S^{1}_f)$ is the inclusion wise minimal (closed) region containing a support of $f$, i.e.~there exists a support $E_f$ of $f$ such that $u,v \in R(S^{1}_f)$ for all $(u,v) \in E_f$. There could be multiple candidates for $S^{1}_f$, we fix one of them arbitrarily. Note that $R(f) \subseteq R(S_f^1)$ and if $f \in F_G$, then $S^{1}_f=f$. For $i >1$ let $S^{i+1}_f$ be a cycle such that $R(S^{i+1}_f)$ is the inclusion wise minimal region containing a support of the face (corresponding to the cycle) $S^{i}_f$ in the graph $G \setminus R^{\circ}(S^{i}_f)$. Let $S_f^{\infty}= \displaystyle\lim_{i\to\infty} S^{i}_f $.

Note that $S^i_f$ is a face in the graph $G \setminus R^{\circ}(S^{i}_f)$. If the face corresponding to $S^{i}_f$ in the graph $G \setminus R^{\circ}(S^{i}_f)$ is geodesic, then $S_f^\infty=S^{i}_f$. If the face corresponding to $S^{i}_f$ in the graph $G \setminus R^{\circ}(S^{i}_f)$ is non-geodesic, then $R(S^{i}_f) \subset R(S^{i+1}_f)$. Hence, $S_f^\infty =S^{n}_f$ and we can compute $S_f^\infty$ in polynomial time. We will show that $\{R(S^{\infty}_f)\}_{f \in F}$ form a laminar family. We first note the following claim, which will be useful later.
\begin{claim} \label{claim:S_f_geodesic}
     The face corresponding to the cycle $S_f^\infty$ in the graph $G \setminus R^{\circ}(S_f^\infty)$ is geodesic. Furthermore, $d_{G[R(S_f^\infty)]}(u,v)=d_G(u,v)$ for all $u, v \in R(S_f^\infty)$.
\end{claim}
\begin{proof}
The first part of the claim follows immediately from the discussion above. Let $P[u,v]$ be a shortest $u,v$ path for some $u,v \in R(S_f^\infty)$. If all the vertices in $P[u,v]$ are contained inside $R(S_f^\infty)$, then $d_{G[R(S_f^\infty)]}(u,v)=d_G(u,v)$. If not, then $P[u,v]$ uses some vertices not in $R(S_f^\infty)$. We can write $P[u,v]$ as a union of three paths: $P[u,x],P[x,y],P[y,v]$, where $P[u,x]$ and $P[y,v]$ are contained completely inside $R(S_f^\infty)$. Note that $x,y \in S_f^\infty$. Since $S_f^\infty$ is geodesic in the graph $G \setminus R^{\circ}(S_f^\infty)$, there exists a shortest $x,y$ path $P'[x,y]$ along the cycle $S_f^\infty$. Hence the union of paths $P[u,x],P'[x,y],P[y,v]$ is a shortest $u,v$ path contained completely inside $R(S_f^\infty)$ and the claim follows. 
\end{proof} 
Let $f$ be a face of $G$ and $u,v \in f$. Let $S(f,u,v)$ be a cycle such that $R(S(f,u,v)) \subseteq R(S_1^f)$ is a minimal (closed) region containing the face $f$ and a shortest $u,v$ path. For convenience, we will refer to $R(S(f,u,v))$ (resp.~$R^{\circ}(S(f,u,v))$) simply as $R(f,u,v)$ (resp.~$R^{\circ}(f,u,v)$). We note that there could be multiple choices for $R(f,u,v)$, we fix one of them arbitrarily.
\begin{claim} \label{clm:minimality}
    For any $f \in F$ and $u,v \in f$, $S(f,u,v) \subseteq E(f) \cup P[u,v]$, where $E(f)$ is the set of edges of face $f$ and $P[u,v]$ is the unique shortest $u,v$ path contained inside $R(f,u,v)$. 
\end{claim}

\begin{proof}
Consider a shortest $u,v$ path $P[u,v]$ in $R(f,u,v)$. Then $R(f,u,v)$ must be contained in the closed region bounded by the face $f$ and $P[u,v]$ (by definition). This implies that $P[u,v]$ forms a part of the boundary of $R(f,u,v)$ and hence $P[u,v] \subseteq S(f,u,v) \subseteq E(f) \cup P[u,v]$. The existence of a shortest $u,v$ path other than $P[u,v]$ contained inside $R(f,u,v)$ would contradict the minimality of $R(f,u,v)$. Hence there is a unique $u,v$ shortest path inside $R(f,u,v)$.
\end{proof}

\begin{lemma} \label{lemma:laminar rfuv}
Let $f_1, f_2 \in F$ be distinct faces of $G$ such that $u_1,v_1 \in f_1$ and $u_2,v_2 \in f_2$. Then one of the following must hold: $R^{\circ}(f_1,u_1,v_1) \cap R^{\circ}(f_2,u_2,v_2)=\emptyset$ or $R(f_1,u_1,v_1) \subseteq R(f_2,u_2,v_2)  \setminus R^{\circ}(f_2)$ or $R(f_2,u_2,v_2) \subseteq R(f_1,u_1,v_1) \setminus R^{\circ}(f_1)$.
\end{lemma}
\begin{proof}
Let $P[u_1,v_1]$ be the unique $u_1,v_1$ shortest path contained inside $R(f_1,u_1,v_1)$. Suppose $R^{\circ}(f_1) \cap R^{\circ}(f_2,u_2,v_2)=\emptyset$. If none of the vertices of $P[u_1,v_1]$ are in $R^{\circ}(f_2,u_2,v_2)$, then $R^{\circ}(f_1,u_1,v_1) \cap R^{\circ}(f_2,u_2,v_2)=\emptyset$ and the condition of the lemma is satisfied. Suppose that this is not the case and the path $P[u_1,v_1]$ contains vertices of $R^{\circ}[f_2,u_2,v_2]$. Order the vertices on the path $P[u_1,v_1]$ from $u_1$ to $v_1$ and choose $x, y$ on it such that (i) $x,y \in S(f_2,u_2,v_2)$ (ii) no vertex of $S(f_2,u_2,v_2)$ appears between $x$ and $y$ in the ordering. Let $P_1[x,y] \subseteq S(f_2,u_2,v_2)$ be a $x,y$ path such that none of the vertices of $P_1[x,y]$ is on face $f_2$ (see Figure \ref{fig:lemma_2}). By Claim \ref{clm:minimality} such a path exists and $P_1[x,y]$ is a shortest $x,y$ path. We can replace $P[x,y]$ by $P_1[x,y]$ to obtain another shortest $u_1,v_1$ path contained inside $R(f_1,u_1,v_1)$. This is a contradiction as $P[u_1,v_1]$ is the unique $u_1,v_1$ path contained inside $R(f_1,u_1,v_1)$ (by Claim~\ref{clm:minimality}).

Now suppose that $R(f_1) \subseteq R(f_2,u_2,v_2)$. If all the vertices in $P[u_1,v_1]$ are contained inside the region $R(f_2,u_2,v_2)$, then $R(f_1,u_1,v_1) \subseteq R(f_2,u_2,v_2)  \setminus R^{\circ}(f_2)$ and the condition of the lemma is satisfied. Suppose that this is not the case and the path $P[u_1,v_1]$ has vertices that are not contained in $R[f_2,u_2,v_2]$. Order the vertices on the path $P[u_1,v_1]$ from $u_1$ to $v_1$ and choose $x, y$ on it such that (i) $x,y \in S(f_2,u_2,v_2)$ (ii) no vertex of $S(f_2,u_2,v_2)$ appears between $x$ and $y$ in the ordering (see Figure~\ref{fig:lemma_2}). Let $P_1[x,y] \subseteq S(f_2,u_2,v_2)$ be a $x,y$ path such that none of the vertices of $P_1[x,y]$ is on face $f_2$. By Claim \ref{clm:minimality} such a path exists and $P_1[x,y]$ is a shortest $x,y$ path. We can replace $P[x,y]$ by $P_1[x,y]$ to obtain another shortest $u_1,v_1$ path contained inside $R(f_1,u_1,v_1)$. This is a contradiction as $P[u_1,v_1]$ is the unique $u_1,v_1$ path contained inside $R(f_1,u_1,v_1)$ (by Claim~\ref{clm:minimality}).
\end{proof}

\begin{figure}[ht!]  
\centering
\includegraphics[width=6.5 in]{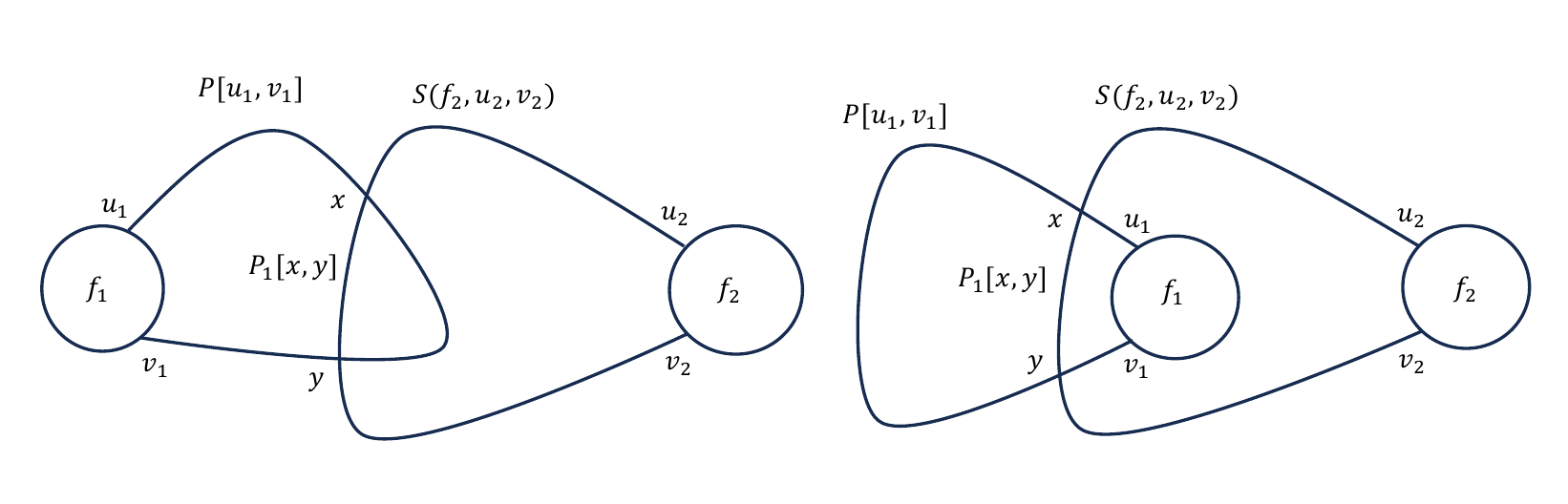}
\caption{Illustrations for the proof of Lemma \ref{lemma:laminar rfuv}.}
\label{fig:lemma_2}
\end{figure}

\begin{claim} \label{clm:sf_rfuv}
     $R(S_f^1)=\bigcup_{u,v \in f} R(f,u,v)$ for each $f \in F$.
\end{claim}         

\begin{proof}
    By the definition of $R(f,u,v)$, it follows that $\bigcup_{u,v \in f} R(f,u,v) \subseteq R(S_{f}^{1})$. Since $R(S_f^1)$ is a minimal region containing at least one shortest path between every pair of vertices in $f$, we have $R(S_f^1) \subseteq \bigcup_{u,v \in f} R(f,u,v)$, and the claim follows. 
\end{proof}

\begin{lemma} \label{Lemma: face_support_laminar}
The following are true for any distinct $f_1,f_2 \in F$:

\begin{enumerate}
    \item Either $R^\circ(S^1_{f_1}) \cap R^\circ(S^1_{f_2})= \emptyset$ or $R(S^1_{f_1}) \subseteq R(S^1_{f_2})$ or $R(S^1_{f_2}) \subseteq R(S^1_{f_1})$. 

    \item If $R(f _1) \subseteq R(S^1_{f_2})$, then $R(S^1_{f_1}) \subseteq R(S^1_{f_2}) \setminus R^{\circ}(f_2)$. Furthermore, $R(S^1_{f_1}) \neq R(S^1_{f_2})$.
\end{enumerate}
\end{lemma}
\begin{proof}
Suppose there exist $u_1,v_1 \in f_1$ such that $R(f_2) \subseteq R(f_1,u_1,v_1)$. By Lemma~\ref{lemma:laminar rfuv}, for all $u_2, v_2 \in f_2$, $R(f_2,u_2,v_2) \subseteq R(f_1,u_1,v_1) \setminus R^{\circ}(f_1)$. By Claim~\ref{clm:sf_rfuv}, this implies that $R(S^1_{f_2}) \subseteq R(S^1_{f_1})\setminus R^{\circ}(f_1)$. The case when there exist $u_2,v_2 \in f_2$ such that $R(f_1) \subseteq R(f_2,u_2,v_2)$ is symmetric and can be handled by the same argument as above. Suppose that for all $u_1,v_1 \in f_1$ and $u_2,v_2 \in f_2$, we have that $R^{\circ}(f_1,u_1,v_1) \cap R^{\circ}(f_2,u_2,v_2)=\emptyset$. By using Claim~\ref{clm:sf_rfuv}, this implies that $R^\circ(S^1_{f_1}) \cap R^\circ(S^1_{f_2})=\emptyset$. Hence one of the conditions stated in the first part of the lemma must always hold. 

The proof of part two of the lemma is similar to part one. If $R(f _1) \subseteq R(S^1_{f_2})$, then by Claim~\ref{clm:sf_rfuv} there must exist $u_2,v_2 \in f_2$ such that $R(f_1) \subseteq R(f_2,u_2,v_2)$. By Lemma~\ref{lemma:laminar rfuv}, for all $u_1, v_1 \in f_1$, $R(f_1,u_1,v_1) \subseteq R(f_2,u_2,v_2) \setminus R^{\circ}(f_2)$. By Claim~\ref{clm:sf_rfuv}, this implies that $R(S^1_{f_1}) \subseteq R(S^1_{f_2})\setminus R^{\circ}(f_2)$. Observe that if $R(S^1_{f_1})=R(S^1_{f_2})$, then $R(f_1) \subseteq R(S^1_{f_2})$ and hence $R(S^1_{f_1}) \subseteq R(S^1_{f_2}) \setminus R^{\circ}(f_2) \subset R(S^1_{f_2})$, which is a contradiction.
\end{proof}
\begin{lemma} \label{Lemma:no_internal_path}
Let $u,v \in S^1_f$ for some $f \in F$ and $P=\{u,u_1,u_2,\ldots,u_l,v\}$ be the vertices on a shortest $u,v$ path. Then $P \cap R^{\circ}(S^1_f)= \emptyset$.
\end{lemma}
\begin{proof}
We will prove the above by contradiction. Let $P=\{u,u_1,u_2,\ldots,u_l,v\}$ be the minimum length counter example to the statement of the lemma, i.e.~$P$ is a shortest $u,v$ path with minimum $|P|$ such that $P \cap R^{\circ}(S^1_f) \neq \emptyset$. Minimality of $|P|$ implies that $P \cap S_f^1=\{u,v\}$. Let $T_{f}$ be the cycle created by replacing one of the two $u,v$ paths on the cycle $S^1_f$ by $P$ such that $R(f) \subseteq R(T_{f})$. Since $ P \cap R^\circ(S_f^1) \neq \emptyset, R(T_{f}) \subset R(S^1_{f})$. 

Suppose there exists $x,y \in f$ such that $R(f,x,y) \subsetneq R(T_f)$. Let $P_1[x,y]$ be the shortest $x,y$ path contained in $R(f,x,y)$. Since $R(f,x,y) \subseteq R(S^1_f)$, all the vertices of $P_1[x,y]$ are contained inside $R(S^1_f)$. Order the vertices on $P_1[x,y]$ from $x$ to $y$ and choose $x_1,y_1 \in P$ such that (i) all vertices in $P_1[x,x_1] \cup P_1[y_1,y]$ lie in $R(T_f)$ (ii) none of the vertices in $P_1[x_1,y_1]$ lie in $R^{\circ}(T_f)$. We observe that $P_1[x,x_1] \cup P[x_1,y_1] \cup P_1[y_1,y]$ is another shortest $x,y$ path contained inside $R(f,x,y)$, which contradicts the uniqueness of $P_1[x,y]$ (by Claim~\ref{clm:minimality}). Hence $R(f,x,y) \subseteq R(T_f)$ for all $x,y \in f$ and by Claim~\ref{clm:sf_rfuv}, $R(S_f^1) \subseteq R(T_f) \subset R(S_f^1)$, which is a contradiction.
\end{proof}

\begin{figure}[ht!]  
\centering
\includegraphics[width=5 in]{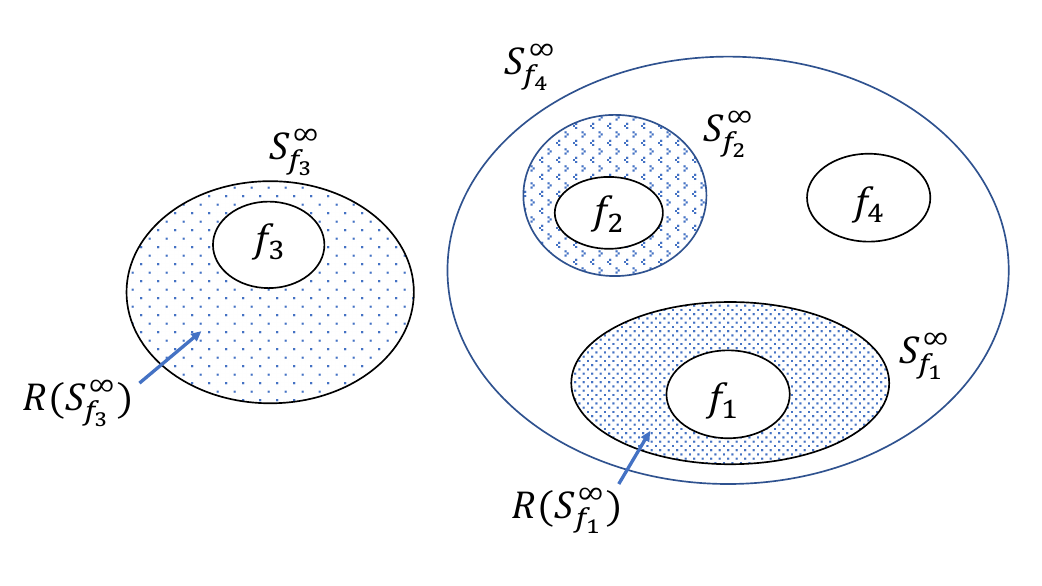}
\caption{Illustration for the statement of Lemma \ref{Lemma: face_support_laminar_final}.}
\label{fig:lemma_5}
\end{figure}

\begin{lemma}\label{Lemma: face_support_laminar_final}
The following are true for any distinct $f_1,f_2 \in F$:
\begin{enumerate}
    \item Either $R^\circ(S^\infty_{f_1}) \cap R^\circ(S^\infty_{f_2})= \emptyset$ or $R(S^\infty_{f_1}) \subseteq R(S^\infty_{f_2})$ or $R(S^\infty_{f_2}) \subseteq R(S^\infty_{f_1})$. 

    \item If $R(f_1) \subseteq R(S^\infty_{f_2})$, then $R(S^\infty_{f_1}) \subseteq R(S^\infty_{f_2}) \setminus R^{\circ}(f_2)$. Furthermore, $R(S^\infty_{f_1}) \neq R(S^\infty_{f_2})$.
\end{enumerate}
\end{lemma}
\begin{proof}
We will prove the above by using induction on the number of vertices. Suppose that the statement of the lemma holds for all graphs on less than $n$ vertices. If the graph doesn't contain any non-geodesic faces, then the statement of the lemma is trivially true. Lemma \ref{Lemma: face_support_laminar} states that the regions $\{R(S_f^1)\}_{f \in F}$ form a laminar family. If all the faces are not geodesic, then there exists $f \in F_N$ such that for any other $f' \in F_N$, either $R^{\circ}(S^1_f) \cap R^{\circ}(S^1_{f'}) = \emptyset$ or $R(S^1_f) \subset R(S^1_{f'})$. Furthermore, $R^{\circ}(f') \cap R^{\circ}(S^1_f)=\emptyset$. By induction, the statement of the lemma holds for the smaller graph $G \setminus R^{\circ}(S^1_f)$. Lemma \ref{Lemma:no_internal_path} states that no vertex in $R^{\circ}(S_f^1)$ lies on a shortest $u,v$ path for any $u,v \in S_f^1$. This implies that removing all the vertices in $R^{\circ}(S^1_{f})$ doesn't change the distance between any pair of vertices in $G\setminus R^{\circ}(S_f^1)$. By definition of $R(S_f^j)$, it immediately follows that for all $i>0$, $R(S_f^{i+1})$ in $G$ is precisely $R(S_{S^1_f}^i)$ in $G \setminus R^{\circ}(S^1_f)$. Hence, $R(S_f^{\infty})$ in $G$ is precisely $R(S_{S^1_f}^\infty)$ in $G \setminus R^{\circ}(S^1_f)$. By Lemma~\ref{Lemma:no_internal_path}, for all $f' \neq f$ such that $R^{\circ}(f') \cap R^{\circ}(S_f^1)=\emptyset$, $R(S_{f'}^i)$ in $G$ and $G \setminus R^{\circ}(S_f^1)$ are the same. By induction the lemma holds for the smaller graph $G\setminus R^{\circ}(S_1^f)$ and by the above observations, it follows that the lemma also holds for $G$.
\end{proof}
Let $\alpha > 1$ be a given constant. Given a simple cycle $C$, we say that $C$ is \textbf{$\alpha$-loose} if for any $u,v \in C$, the length of shortest $u,v$ path in $G[R^{\circ}(C) \cup \{u,v\}]$, which does not use any edge of $C$, is at least $\alpha \cdot d_{G}(u,v)$, i.e.~$d_{G[R^{\circ}(C) \cup \{u,v\}] \setminus E(C)} (u,v) \geq \alpha \cdot d_{G}(u,v)$. To make the induction argument work, only the laminar property of face supports is not sufficient and we need a stronger structure. We say that a length function $l:E \rightarrow \mathbb{R}_{\geq 0}$ is \textbf{$\alpha$-good} if there exists $\{S_f^\infty\}_{f\in F}$  such that $S^{\infty}_f$ is $\alpha$-loose for each $f \in F$. We next show that any length function can be converted into a $\alpha$-good one by modifying the edge-lengths by a multiplicative factor of at most $\alpha$.

\begin{theorem} \label{Theorem:alpha_good_decomposition}
Let $G=(V,E)$ be a planar graph with $f_I$ as the infinite face, edge-lengths $l :E \rightarrow \mathbb{R}_{\geq 0}$ and $\alpha >1$ be a constant. Then there exists new edge-lengths $l':E \rightarrow \mathbb{R}_{\geq 0}$ such that:
\begin{enumerate}
    \item $G$ is $\alpha$-good with respect to $l'$
    \item $l'(u,v)=l(u,v)$ for $u,v \in f_I$ 
    \item $\dfrac{l(e)}{\alpha}\leq l'(e) \leq l(e)$ for $e \in E$.
\end{enumerate}
\end{theorem}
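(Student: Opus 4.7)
The plan is to proceed by induction on $|F_N|$, the number of non-geodesic faces of $G$. In the base case $|F_N|=0$ I set $l'=l$; then $G$ is trivially $\alpha$-good and both length bounds hold.

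For the inductive step with $|F_N|\geq 1$, Lemma~\ref{Lemma: face_support_laminar} tells me that $\{R(S_f) : f\in F_N\}$ is laminar, so there is an innermost non-geodesic face $f^*$ with $R(S_{f^*})$ inclusion-minimal. Since $R(S_{f^*})$ is bounded and $f_I$ is the infinite face, $\overline{R(S_{f^*})}$ contains no edge of $f_I$, so any modification restricted to this region automatically respects the constraint $l'(e)=l(e)$ for $e\in E\cap f_I$.

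I would modify $l$ only on the edges of $\overline{R(S_{f^*})}$ at this step, leaving everything else alone. Concretely, for each pair $u,v\in f^*$ whose shortest path in $G$ uses an interior shortcut, let $P_{\mathrm{in}}(u,v)$ denote the shortest path using only vertices in $I(S_{f^*})\cup\{u,v\}$ and $P_{\partial}(u,v)$ the shortest path along the cycle $S_{f^*}$, and write $\rho(u,v) = P_{\mathrm{in}}(u,v)/P_{\partial}(u,v)\in(0,1]$. Set $c=\max\bigl(1/\alpha,\min_{u,v}\rho(u,v)\bigr)\in[1/\alpha,1]$ and scale every edge of $S_{f^*}$ by $c$, leaving the strictly interior edges unchanged. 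If $\min_{u,v}\rho(u,v)\geq 1/\alpha$, the scaling makes every boundary path no longer than the corresponding interior path, so $f^*$ becomes geodesic with respect to $l'$ and drops out of $F_N$. Otherwise $c=1/\alpha$, and for each pair on $f^*$ the new boundary path length $P_{\partial}/\alpha$ is short enough to dominate the new shortest-path distance, forcing $d_G^{l'}(u,v)\leq P_{\mathrm{in}}(u,v)/\alpha$ and giving the $\alpha$-loose condition at $S_{f^*}$. In either case I apply the inductive hypothesis to the graph with the updated lengths (which has at least one fewer non-geodesic face requiring attention), and by laminarity each edge is scaled at most once throughout the induction, so the aggregated factor stays in $[1/\alpha,1]$.

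The main obstacle I expect is the ``deep shortcut'' branch, where uniform scaling of $S_{f^*}$ by $1/\alpha$ does not by itself make $f^*$ geodesic; here one must verify the $\alpha$-loose property directly, which requires leveraging the minimality of $S_{f^*}$ (every edge on the cycle carries some shortest path needed for the support) to argue that the scaled boundary provides the required alternative to every interior path. A secondary subtlety is to guarantee that scaling $S_{f^*}$ does not promote a previously geodesic outer face into $F_N$: this follows from the laminar property, since any face outside $R(S_{f^*})$ either strictly contains $R(S_{f^*})$ or is disjoint from it, and in both cases the single factor $c$ applied to $S_{f^*}$ does not disturb the outer-face geodesic inequalities in a way the outer inductive step cannot absorb.
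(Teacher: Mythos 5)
Your proposal diverges from the paper's argument at a structural level, and the core step does not go through. The paper's proof inducts on the number of faces, works from the \emph{outside} (maximal non-geodesic regions and the infinite face), and the key operation is to scale an \emph{interior shortcut path} $P$ (one that avoids the infinite face) by $1/\beta$ where $\beta\le\alpha$ is the exact shortcut ratio; this makes $P$ a shortest path, which splits $\overline{R(f_I)}$ into two smaller regions and lets the induction proceed. You instead scale the \emph{boundary} cycle $S_{f^*}$ of an innermost face. These are not equivalent operations, and yours breaks in two ways.

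First, scaling $S_{f^*}$ is not permitted in general. When $f^*$ is the unique maximal non-geodesic face and its support reaches the boundary, $S_{f^*}$ coincides with the cycle bounding $f_I$, and scaling its edges violates the constraint $l'(e)=l(e)$ for $e\in E\cap f_I$. Your claim that ``$\overline{R(S_{f^*})}$ contains no edge of $f_I$'' assumes this never happens, but it is exactly the boundary case the paper treats separately; the paper only ever shrinks edges not on $f_I$.

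Second, the ``deep shortcut'' branch is wrong. If $\rho(u,v)=P_{\mathrm{in}}(u,v)/P_\partial(u,v)<1/\alpha$ and you scale $S_{f^*}$ by $c=1/\alpha$, then after scaling $P_\partial/\alpha > P_{\mathrm{in}}$, so the new shortest $u$-$v$ distance is still realized by the (unchanged) interior path, $d_G^{l'}(u,v)\le P_{\mathrm{in}}(u,v)$. The $\alpha$-loose condition demands $P_{\mathrm{in}}(u,v)\ge\alpha\cdot d_G^{l'}(u,v)\ge\alpha\cdot P_{\mathrm{in}}(u,v)$, which is impossible for $\alpha>1$. More generally, with the old boundary distance $P_\partial$ as a baseline, $\alpha$-looseness after the scaling needs $\rho/c\ge\alpha$, i.e.\ $c\le\rho/\alpha$; combined with the required $c\ge1/\alpha$ this forces $\rho\ge1$, contradicting non-geodesicity. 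So uniform scaling of the support boundary cannot certify $\alpha$-looseness; the paper instead shrinks the shortcut itself (by the exact ratio $1/\beta$, which never exceeds the allowed budget because not being $\alpha$-loose bounds $\beta\le\alpha$), which changes $d_G$ in a controlled way and, crucially, creates a new shortest path that splits the problem.

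There are also secondary gaps: your ``$f^*$ becomes geodesic'' conclusion conflates the face cycle $f^*$ with the support cycle $S_{f^*}$ (making paths along $S_{f^*}$ competitive does not give a shortest path along the edges of $f^*$); in the deep-shortcut branch the number of non-geodesic faces need not drop, so the induction may stall; and shrinking boundary edges can change which faces are geodesic and alter the support cycles of other faces, so the assertion that laminarity protects the rest of the graph needs an actual argument, which the paper supplies via Lemma~\ref{Lemma:no_internal_path} and the careful outside-in decomposition.
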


\begin{proof}
We will prove the theorem by induction on the number of faces in the graph. If $|F|=1$ or $f_I$ is the only non-geodesic face, then $F_N= \emptyset$ and the statement of the theorem follows trivially by setting $l'=l$. Suppose that $G$ has at least one non-geodesic face other than $f_I$ w.r.t edge-length $l$. By Lemma \ref{Lemma: face_support_laminar_final}, regions in the set $\{R(S_f^\infty)\}_{f\in F}$ form a laminar family. A face $f \in F_N$ is called maximal if for any other face $f'$, either $R(S^\infty_{f'}) \subset R(S^\infty_f)$ or $R^{\circ}(S^\infty_{f'}) \cap R^{\circ}(S^\infty_f) = \emptyset$. Let $f_1,f_2,\dots,f_k \in F_N$ be the maximal faces of $G$ w.r.t $l$. Let $G_i = G[R(S_{f_i}^\infty)]$ and $F_i$ be the faces of $G_i$. If $k=1$ and $R(S_{f_1}) \neq R(f_I)$ or $k \geq 2$, then the number of faces in each $G_i$ is strictly less than $G$. By induction, for each $G_i$ (with infinite face $S^{\infty}_{f_i}$) we have edge-lengths $l_i$ satisfying the conditions of the theorem. We construct the new edge-length $l'$ as follows: if an edge $e$ is in $G_i$, we set $l'(e)=l_i(e)$; if an edge $e$ is not contained in any of the $G_i$, we set $l'(e)=l(e)$. Note that if an edge $(u,v)$ is contained in both $G_i$ and $G_j$, then it must be present on the infinite face of $G_i$ and $G_j$ and $l_i(u,v)=l_j(u,v)=l(u,v)$ (by property (2)). Hence $l'$ is well-defined. 

For the sake of better exposition, we will use $d_{H}^{x}(u,v)$ to denote the shortest path distance between vertices $u$ and $v$ in graph $H$ w.r.t edge-length $x$. From the definition of $G_i$, it follows that w.r.t edge-lengths $l$, $R(S^{\infty}_f) \subseteq R(S^{\infty}_{f_i})$ for each $f \in F_i$. Claim \ref{claim:S_f_geodesic} states that for each $u,v \in R(S_{f_i}^\infty)$, there is a $u,v$ shortest path using only the vertices in $R(S_{f_i}^\infty)$. This implies that $d_{G_i}^{l}(u,v)=d_{G}^{l}(u,v)$ for $u,v \in G_i$. By condition (2) of the induction hypothesis, $d_{G_j}^{l'}(u,v)=d_{G_j}^{l}(u,v)$ for all $u,v \in S_{f_j}^\infty$ and $j=1,2,\ldots,k$. Hence we conclude that $d_{G_i}^{l'}(u,v)=d_{G}^{l'}(u,v)$ for all $u,v \in G_i$. This implies that w.r.t edge-length $l'$, $R(S^{\infty}_f) \subseteq R(S^{\infty}_{f_i})$ for each $f \in F_i$. Hence for any $f \in F_i$, $R(S^{\infty}_f)$ (w.r.t edge-lengths $l'$) remains unchanged after gluing together different length functions for $G_1,G_2,\ldots,G_k$. Since $l'$ is $\alpha$-good for each $G_i$ (by the induction hypothesis), $l'$ is $\alpha$-good for $G$ as well. 

Suppose that $k=1$ and $R(S^{\infty}_{f_1}) = R(f_I)$. We consider two cases depending on whether the face $f_I$ is $\alpha$-loose or not. Suppose that the face $f_I$ is not $\alpha$-loose. By the definition of an $\alpha$-loose face, there must exist $u,v \in f_I$ such that the shortest $u,v$ path using no other vertices on $f_I$ has length $\beta \cdot d_G(u,v)$, for some $\beta < \alpha$. Let $P[u,v]=\{u,u_1,\ldots,u_l,v\}$ be such a path with minimum value of $\beta$. Note that $\beta=1$ is also possible. Let $l_1$ be the length function obtained by setting $l_1(e)=l(e)/\beta$ for all edges on the path $P[u,v]$ and $l_1(e)=l(e)$ otherwise.
\begin{claim} \label{claim:shortest_path}
For any $x,y \in f_I$, $l_1(x,y)=l(x,y)$.
\end{claim}
\begin{proof}
Suppose that the statement of the claim is false. By construction, $l_1(x,y) \leq l(x,y)$ for any $x,y \in f_I$. Let $x,y \in f_I$ be such that $l(x,y)>l_1(x,y)$ and $l_1(x,y)$ is minimum. Let $P_1[x,y]$ be the minimum length $x,y$ path w.r.t $l_1$. By minimality of $l_1(x,y)$, it follows that $P_1[x,y] \cap f_I=\{x,y\}$, i.e.~$P_1(x,y)$ uses only the vertices in $\{ x,y \} \cup R^{\circ}(f_I)$. Let the length of path $P_1(x,y)$ w.r.t $l$ be $l(P_1(x,y))$. Since $l_1$ is constructed by reducing the length of a subset of edges in $P_1(x,y)$ by a factor of $\beta$, we have that $l(P_1(x,y)) \leq \beta \cdot l_1(x,y) < \beta \cdot l(x,y)$. This contradicts the minimality of $P(u,v)$ and hence such a $x,y \in f_I$ cannot exist. 
\end{proof}
The path $P[u,v]$ divides $R(f_I)$ into two closed regions, say $R_1$ and $R_2$. Let $G_1=G[R_1]$ and $G_2=G[R_2]$. By Claim \ref{claim:shortest_path}, $P$ is a shortest $u,v$ path w.r.t $l_1$. This implies that for $u,v \in G_i$, $d_{G_i}(u,v)=d_{G}(u,v)$, for $i=1,2$. Hence for any $f \in G_i$, $R(S^{\infty}_f) \subseteq R_i$, for $i=1,2$. Therefore we can use induction to compute modified length functions for $G_1, G_2$ separately and combine them as before. More precisely, we inductively find length functions $l_1,l_2$ satisfying the conditions of the theorem and set $l'(e)=l_i(e)$ depending on whether edge $e$ lies in $G_1$ or $G_2$. If $e$ is in $G_1$ and $G_2$, then $e$ is on the infinite face of $G_1$ and $G_2$ and by the statement of the theorem, $l_1(e)=l_2(e)$, and $l'$ is well defined. By using the induction hypothesis, Claim \ref{claim:shortest_path} and arguments similar to the one used in the previous case, it immediately follows that the condition (i) and (ii) of the theorem is also satisfied. Hence, $G$ is $\alpha$-good w.r.t $l'$.

Suppose that the face $f_I$ is $\alpha$-loose and $S_{f_1}^\infty=f_I$. Then the statement of the theorem holds for non-geodesic face $f_1$. If there are no other non-geodesic faces in $R^\circ(f_I)$, then the theorem follows by setting $l'=l$. Otherwise, let $g_1,g_2,\ldots,g_k$ be the maximal (non-geodesic) faces contained inside $R^\circ(f_I)$. By Lemma \ref{Lemma: face_support_laminar_final}, $R^{\circ}(S^\infty_{g_i}) \cap R^{\circ}(S^\infty_{g_j})=\emptyset$ for $i \neq j$. Let $G_i=G[R(S_{g_i})]$. By induction hypothesis, we have length functions $l_1,\ldots,l_k$ that satisfy the theorem. We set $l'(e)=l(e)$ for any edge not contained inside any of the $G_i's$ and set $l'(e)=l_i(e)$ if $e$ is in $G_i$. The proof that $l'$ satisfies the conditions of the theorem follows the same arguments as in the previous cases and we omit it here.
\end{proof}
\section{Embedding For The Geodesic Pairs} \label{Section: embed_geodesic_pairs}
Let $G=(V,E)$ be a planar graph and $u,v \in V$ be such that $u,v \in f$ for some  $f \in F$. Furthermore, suppose that there exists a shortest $u,v$ path using only the edges of $f$, i.e.~$R(f,u,v)=R(f)$. We call $(u,v)$ a \textbf{geodesic pair}. Let $T$ be the set of all the geodesic pairs in $G$. In Theorem \ref{Theorem: embed_geodesic_pairs}, we show that there exists an embedding of $V$ into $L_1$ which preserves the distances between all the geodesic pairs within a constant factor. Using the congestion-distortion theorem, the following result follows directly from Theorem 12 of Kumar~\cite{kumar2020multicommodity}.
\begin{theorem} \label{Theorem: Separable Instances}
Let $G=(V,E)$ be a planar graph with edge length $l:E \rightarrow \mathbb{R}_{\geq 0}$ and $R$ be a set of pairs of vertices $(u,v)$ such that both $u$ and $v$ lie on the same face. Let $R_f \subseteq R$ denote the set of pairs of vertices incident on face $f$. Suppose for each $f \in F$, there exists disjoint set of vertices $X_1,Y_1,X_2,Y_2,\ldots,X_k,Y_k \subseteq f$ such that vertices in $X_i,Y_i$ and $X_i \cup Y_i$ appear contiguously on $f$ and for each $(u,v) \in R_f, u \in X_j,v \in Y_j$ for some $j \in \{1,2,\dots,k\}$. Then there exists $h: V \rightarrow L_1$ such that $||h(u)-h(v)|| \leq d_{G}(u,v)$ for all $(u,v) \in E$ and $||h(u)-h(v)|| \geq d_{G}(u,v)/3$ for all $(u,v) \in R$. 
\end{theorem}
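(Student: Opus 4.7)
The plan is to deduce the statement from Theorem~12 of Kumar~\cite{kumar2020multicommodity} by combining it with the congestion-distortion theorem stated in Section~2.2. Setting $H=(V,T)$ as the demand graph, the separability hypothesis here, namely that on each face $f$ the pairs in $T_f$ partition into disjoint contiguous paired blocks $(X_1,Y_1),\ldots,(X_k,Y_k)$ appearing in clockwise order, is precisely the hypothesis of Kumar's theorem. That theorem asserts that, for any planar supply graph and any demand graph of this form satisfying the cut-condition, a $1/3$-fraction of every demand can be simultaneously routed.

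Applying this flow result uniformly over all choices of capacity functions on $G$ yields $\texttt{cong}(G,H) \leq 3$. The congestion-distortion equivalence then gives $\texttt{dist}(G,H) \leq 3$: for the given length function $l$, there exists an embedding $h:V \to L_1$ with expansion at most $1$ on supply edges and contraction at most $3$ on demand pairs. Unwinding the definitions from Section~2.1, this means $||h(u)-h(v)|| \leq d_G(u,v) \leq l(u,v)$ for every $(u,v)\in E$ and $||h(u)-h(v)|| \geq d_G(u,v)/3$ for every $(u,v)\in T$, which is precisely the conclusion. Equivalently, via the cut-metric/$L_1$ correspondence of Section~\ref{Cut-Metric-Embedding Equivalence}, one can present $h$ as a non-negative combination of vertex cuts whose distance preservation properties inherit directly from the cut structure certifying the $1/3$-routability in Kumar's theorem.

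The main obstacle is not the reduction itself but verifying that the separable-face hypothesis here matches Kumar's exactly; in particular one must check that Kumar's flow result handles the case of arbitrarily many faces, each carrying its own independent partition into paired contiguous blocks $(X_i,Y_i)$, rather than a single-face Okamura-Seymour-style setting. Once that correspondence is confirmed, the argument is essentially a direct appeal to congestion-distortion, and no further combinatorial construction is needed for this theorem.
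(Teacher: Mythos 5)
Your proposal matches the paper's own treatment exactly: the paper offers no independent proof of this theorem, but simply notes that it follows directly from Theorem~12 of Kumar~\cite{kumar2020multicommodity} via the congestion-distortion equivalence, which is precisely the reduction you carry out. Your added caveat about confirming that Kumar's hypothesis accommodates independent contiguous block partitions on many faces is a reasonable sanity check, but otherwise the argument is the same.
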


\begin{figure}[ht] \label{thm_kumar}
\centering
\includegraphics[width=4 in]{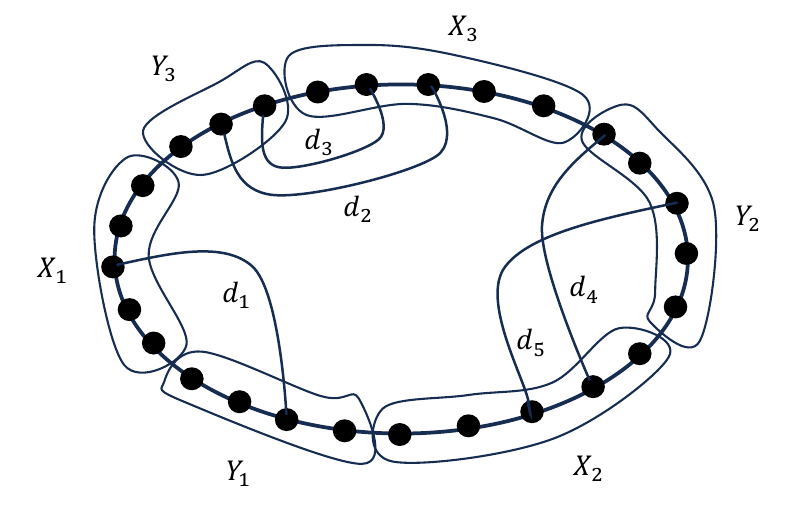}
\caption{The figure shows a face $f$ and the terms associated with it in Theorem~\ref{Theorem: Separable Instances}. $X_1,Y_1,X_2,Y_2,X_3,Y_3$ form a partition of the vertex set of face $f$ and $R_f$ consists of $d_1,d_2,d_3,d_4,d_5$.}
\end{figure}
\begin{theorem} \label{Theorem: embed_geodesic_pairs}
Let $G=(V,E)$ be a planar graph with edge length $l:E \rightarrow \mathbb{R}^{+}$. Then there exists $g:V \rightarrow L_1$ such that $||g(u)-g(v)|| \leq d_{G}(u,v)$ for all $(u,v) \in E$ and $||g(u)-g(v)|| \geq d_{G}(u,v)/21$ for all $(u,v) \in T$. 
\end{theorem}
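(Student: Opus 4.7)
The plan is to apply Theorem \ref{Theorem: Separable Instances} after a constant-factor decomposition of the geodesic pairs. For each face $f \in F$, let $T_f := \{(u,v) \in T : u,v \in f\}$. Since $(u,v) \in T_f$ has a shortest path that lies along one of the two arcs of $f$ between $u$ and $v$, the ``short arc'' $\sigma_{uv}$ has length $d_G(u,v) \le L_f/2$, where $L_f$ is the total cycle length of $f$; moreover every sub-arc of $\sigma_{uv}$ is itself a shortest path, so the only way two short arcs can properly overlap is to share a common shortest-path sub-arc. I would use this structural rigidity to partition $T_f$ into seven subsets $T_f^1,\dots,T_f^7$, each separable on $f$ in the sense of Theorem \ref{Theorem: Separable Instances}: for each $i$ there are disjoint consecutive arcs $A_j^i = X_j^i \cup Y_j^i$ on $f$, with $X_j^i$ preceding $Y_j^i$ clockwise, such that every pair in $T_f^i$ has one endpoint in some $X_j^i$ and the other in the matching $Y_j^i$.

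With such a decomposition in hand, for each color $i \in \{1,\dots,7\}$ the family $(T_f^i)_{f \in F}$ is a valid input to Theorem \ref{Theorem: Separable Instances}, which produces an embedding $h_i : V \to L_1$ satisfying $\|h_i(u)-h_i(v)\| \le d_G(u,v)$ for every $(u,v) \in E$ and $\|h_i(u)-h_i(v)\| \ge d_G(u,v)/3$ for every $(u,v) \in \bigcup_f T_f^i$. I then define $g := \tfrac{1}{7}(h_1 \oplus \cdots \oplus h_7)$. For every edge $(u,v) \in E$,
\[
\|g(u)-g(v)\| \;=\; \tfrac{1}{7}\sum_{i=1}^{7}\|h_i(u)-h_i(v)\| \;\le\; d_G(u,v),
\]
and for any geodesic pair $(u,v) \in T$, choosing $i$ and $f$ with $(u,v) \in T_f^i$,
\[
\|g(u)-g(v)\| \;\ge\; \tfrac{1}{7}\|h_i(u)-h_i(v)\| \;\ge\; \tfrac{d_G(u,v)}{21},
\]
which is precisely what the theorem asserts.

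The main obstacle is the first step: justifying that a constant (concretely seven) number of colors suffice for a separable per-face decomposition. My intended route is to mark a bounded number of anchor points around $f$ and classify each $(u,v) \in T_f$ by the subset of anchors that lie inside $\sigma_{uv}$; because $|\sigma_{uv}| \le L_f/2$, each short arc contains only a bounded number of anchors, which keeps the crossing pattern within each class under control. The delicate point is that the separability condition permits arbitrary crossings within a single group $(X_j,Y_j)$ but forbids them between distinct groups, so the anchors must be chosen so that every pair of arcs of the same color either sits inside a common contiguous arc of $f$ (where their endpoints split into a bipartite $X_j \to Y_j$ pattern) or sits in entirely disjoint arcs of $f$. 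Making this bookkeeping precise---and handling degenerate configurations of geodesic arcs whose length is close to $L_f/2$---is where the shortest-sub-arc property has to be invoked most carefully.
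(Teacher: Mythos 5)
Your high-level framework (decompose the geodesic pairs into constantly many families, feed each to Theorem~\ref{Theorem: Separable Instances}, and average) matches the paper, and the target constant $21$ is right. But the specific decomposition step you propose — seven separable color classes covering \emph{all} of $T_f$ — cannot work, and the paper's proof uses an extra ingredient that your sketch is missing.

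The obstruction: consider a face $f$ that is entirely geodesic (for instance an isometric cycle), so every pair of vertices on $f$ is a geodesic pair and $T_f$ contains all $\binom{|f|}{2}$ pairs. A single separable class covers only those pairs $(u,v)$ with $u \in X_j$, $v \in Y_j$ for \emph{disjoint contiguous} arcs $X_j, Y_j$ that appear in order around $f$; in particular, each color class is a union of complete bipartite graphs sitting on pairwise disjoint arcs. Covering the complete graph on a cyclic order by such families requires $\Theta(\log |f|)$ colors (each color covers at most about half the pairs of any sub-arc, so the recursion goes $\log$ deep). No choice of a constant number of anchors, nor any bookkeeping built on top of them, can fix this: once two pairs sit entirely inside the same anchor-free arc, you are back to decomposing a complete graph on that arc. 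Your remark that ``within a single group arbitrary crossings are permitted'' is true but does not save you: it is the pattern $u_1 < v_1 < u_2 < v_2$ with pairs $(u_1,v_1)$ and $(u_2,v_2)$, both contained in the same region, that forces either a new $(X,Y)$ block or a new color, and there are linearly many such conflicts.

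The paper gets around this with a step you don't have. It first greedily segments each face cycle into maximal geodesic arcs $S_f^1, S_f^2, \ldots$ and collects the endpoint pairs $T_p$; by construction $G \cup T_p$ is planar, so Theorem~\ref{seymour_embedding} gives a non-expansive embedding $h$ with $\|h(u)-h(v)\| = d_G(u,v)$ for all $(u,v) \in T_p$. The crucial observation (the ``squeeze'' along a shortest path) is that a non-expansive embedding that exactly preserves the distance between the two endpoints of a shortest path must exactly preserve \emph{all} pairwise distances along that path, so $h$ already handles every geodesic pair whose endpoints lie within a single segment $S_f^i$. With those pairs removed, the greedy construction guarantees that any remaining geodesic pair spans at most three consecutive segments, and then a clean $\{0,1,2\} \times \{\text{type 1},\text{type 2}\}$ classification by $\min(i,j) \bmod 3$ gives the six separable families. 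That combination — one Seymour-type embedding plus six separable ones, averaged with weight $1/7$ — is what yields the $1/21$ bound. Without the Seymour step your decomposition cannot be made to close.
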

\begin{proof}
We first construct a set of geodesic pairs $T_p \subseteq T$ as follows. Initially $T_p=\emptyset$. Let $f \in F$ and $V(f)=\{v_1,v_2,\ldots,v_{n},v_{n+1}=v_1\}$ be the vertices of $f$ in clockwise order. We partition $V(f)$ into contiguous sets of vertices as follows: let $k$ be the largest index such that $P_k=\{v_1,v_2,\ldots, v_k\}$ is a shortest $(v_1,v_k)$ path. We set $T_p=T_p \cup (v_1,v_k)$ and $C_{f}^{1}=\{v_1,v_2,\ldots,v_k\}$. We then start from $v_{k+1}$ and find the largest index $l$ such that $\{v_{k+1},\ldots,v_l\}$ is a shortest $(v_{k+1},v_l)$ path. We set $T_p=T_p \cup (v_{k+1},v_l)$ and $C_{f}^{2}=\{v_{k+1},v_{k+2},\ldots,v_l\}$. We continue with this process until all the vertices in $V(f)$ have been exhausted, i.e.~$V(f)=\cup_{i\geq 0}C_{f}^{i}$. We repeat the above for all the faces of $G$. By construction $G \cup T_p$ is planar. Therefore by Theorem \ref{seymour_embedding}, we have an embedding, say $h: V \rightarrow L_1$, that preserves the distance between pairs of vertices in $T_p$ exactly. In other words, $||h(u)-h(v)|| = d_{G}(u,v)$ for all $(u,v) \in T_p$ and $||h(u)-h(v)|| \leq d_{G}(u,v)$ for all $(u,v) \in E$. Let $T'=\{(u,v)|u,v \in C_{f}^{i}, f \in F, i \geq 0 \}$. By construction, first and last vertex (w.r.t the ordering of vertices on $V(f)$) of each of the set $C_{f}^{i}$ form a geodesic pair. Hence for all $(u,v) \in T'$, we have that $\||h(u)-h(v)||= d_{G}(u,v)$.
In Claim \ref{Claim: Partition Geodesic Pairs}, we show that the geodesic pairs in $T \setminus T'$ can be partitioned into $T_0, T_1,T_2,T_3$, $T_4,T_5$ such that each one of them satisfies the condition of Theorem \ref{Theorem: Separable Instances}. Then by Theorem \ref{Theorem: Separable Instances}, there exists $h_i:V\rightarrow L_1$ such that $||h_i(u)-h_i(v)|| \leq d_{G}(u,v)$ for $(u,v) \in E$ and $||h_i(u)-h_i(v)|| \geq d_{G}(u,v)/3$ for $(u,v) \in T_i$ for $i=0,1,2,3,4,5$. By setting $g=\dfrac{1}{7}(h+h_0+h_1+h_2+h_3+h_4+h_5)$, we obtain $||g(u)-g(v)|| \leq d_{G}(u,v)$ for $(u,v) \in E$ and $||g(u)-g(v)|| \geq d_{G}(u,v)/21$ for $(u,v) \in T$.
\begin{claim} \label{Claim: Partition Geodesic Pairs}
$T \setminus T'$ can be partitioned into $T_0,T_1,T_2,T_3,T_4,T_5$ such that each of the $T_i$'s satisfies the condition of Theorem \ref{Theorem: Separable Instances}.
\end{claim}
\begin{proof}
It is sufficient to show that the geodesic pairs incident on each face can be partitioned into $T^{f}_{i}$ for $i=0,1,2,3,4,5$ such that each $T^{f}_{i}$ satisfies the condition of Theorem \ref{Theorem: Separable Instances}. Let $f \in F$ and $C_{f}^{1},C_{f}^{2},\ldots, C_{f}^{k}$ be the partition of $V(f)$ created in the first step. Suppose there exists $(u,v) \in T \setminus T'$ such that $u \in C_{f}^{(l)}, v \in C_{f}^{(l+t)}$ for some $t \geq 3$, where $(p)=p$ if $p\leq k$ and $p-k$ otherwise. Then our procedure in the first step would not have created separate partitions for $C_{f}^{(l+1)}$ and $C_{f}^{(l+2)}$ and hence such a geodesic pair $(u,v)$ can't exist. Therefore for all $u,v \in C(f)$ and $(u,v) \in T \setminus T'$, one of the following must hold: $u \in C_{f}^{(l)}, v \in C_{f}^{(l+1)}$ (called \emph{type 1}) or $u \in C_{f}^{(l)}, v \in C_{f}^{(l+2)}$ (called \emph{type 2}) for some $l \in \{1,2,\ldots,k\}$.

Let $(u,v)$ be a geodesic pair $(u,v)$ such that $u \in C_{f}^{i}$ and $v \in C_{f}^{j}$. We say that $(u,v)$ belongs to $\emph{class}$ $i$ if $\min(i,j)~\mod 3=i$. Let $T^{f}_0,T^{f}_1,T^{f}_2$ be the geodesic pairs of class 0,1,2 of \emph{type 1} incident on the face $f$ and $T^{f}_3,T^{f}_4,T^{f}_5$ be the geodesic pairs of class 0,1,2 of \emph{type 2} incident on the face $f$. We set $T_i=\bigcup_{f \in F} T^{f}_i$ for $i=0,1,2,3,4,5$. The geodesic pairs in each of the $T_{i}$'s satisfy the condition of Theorem \ref{Theorem: Separable Instances} and this completes the proof of the claim.
\end{proof}  \end{proof}

\section{Single Source Shortest Path Embeddings} \label{Section:single_source}
In this section, we show that there exists a polynomial time computable embedding such that the distance between all pairs of vertices whose shortest path uses a fixed vertex is approximately preserved. To prove this result, we make use of a well known result of Klein, Plotkin and Rao \cite{klein1993excluded} on small diameter decomposition. Let $(X,D)$ be a finite metric space. A distribution $\mu$ over (vertex) partitions of $X$ is called $(\beta,\Delta)$-lipschitz if every partition $P$ in the support of $\mu$ satisfies $ S \in P \implies diam_{X}(S) \leq \Delta$ and moreover for all $x,y \in X$, $\underset{P \sim \mu}{\mathcal{P}}[P(x) \neq P(y)] \leq \beta \cdot \dfrac{d(x,y)}{\Delta}$. Klein, Plotkin and Rao \cite{klein1993excluded} showed that there exists a $(c,\Delta)$-lipschitz partition of a planar metric where $\Delta>0$ is arbitrary and $c$ is an absolute constant (independent of $\Delta$). In fact Klein, Plotkin and Rao \cite{klein1993excluded} showed that such a partition exists for any minor-closed family of graphs.

\begin{theorem}[Klein, Plotkin and Rao \cite{klein1993excluded}] \label{Theorem:KPR}
For any given finite planar metric $(X,D)$ and $\Delta >0 $, there exists a polynomial-time computable $(c,\Delta)$-lipschitz partition of $(X,D)$ for some absolute constant $c$ independent of $\Delta$.
\end{theorem}
\begin{theorem} \label{Theorem: single_source_embedding}
Let $G=(V,E)$ be a planar graph with edge-length $l:E \rightarrow \mathbf{R}_{\geq 0}$ and $v \in V$. Let $T=\{(s_i,t_i)\}_{i=1}^{k}$ be the set of all pair of vertices such that $d(s_i,t_i)=d(v,s_i)+d(v,t_i)$ for $i \in [k]$. Then there exists a polynomial time computable $g:V \rightarrow L_1$ and an absolute constant \footnote{we emphasize that $\beta$ is independent of $G$, in fact one can take $\beta=96 \cdot c$} $\beta \geq 1$ such that $d(s_i,t_i)/\beta \leq||g(s_i)-g(t_i)|| \leq d(s_i,t_i)$ for all $(s_i,t_i) \in T$.
\end{theorem}

\begin{proof}
We first prove the theorem for the special case when $d(v,s_i)=d(v,t_i)$ for all $(s_i,t_i) \in T$. Let $B_i=\{x|d(v,x) \leq 2^{i+1}\}$ for $i\geq 0$. Since $d(v,s_j)=d(v,t_j)$ for each $(s_j,t_j)\in T$, there exists an $i$ such that $(s_j,t_j) \in B_{i+1} \setminus B_i$. Let $T_i=\{(s_j,t_j)\in T |2^{i} \leq d(v,s_j)=d(v,t_j) \leq 2^{i+1} \}$ and $G_i$ be the graph obtained by setting the length of all the edges contained inside $Y_i=B_{i-2}$ and $Q_i=V \setminus B_{i+2}$ to 0. More formally, for all $(u,w)\in E$ such that $u,w \in Y_i$ or $u,w \in Q_i$, we set $l_{(u,w)}=0$ to obtain $G_i$ from $G$. Claim \ref{Claim:distrotion due to contraction} shows that distance between the $(s_i,t_i)$ pairs in $G_i$ are within a constant factor of the distance in $G$.
\begin{claim} \label{Claim:distrotion due to contraction}
$d_{G}(s_j,t_j)/4 \leq d_{G_i}(s_j,t_j) \leq d_{G}(s_j,t_j)$ for all $(s_j,t_j) \in T_i$.
\end{claim}
\begin{proof}
$d_{G_i}(s_j,t_j) \leq d_{G}(s_j,t_j)$ follows directly from construction since each $G_i$ is formed by setting the length of some edges of $G$ to 0. Since $(s_j,t_j) \in T_i$, we have $2^{i+1} \leq d(s_j,t_j)=d(s_j,v)+d(t_j,v) \leq 2^{i+2}$. If the shortest path between $(s_j,t_j)$ in $G_i$ doesn't use any vertex in $Y_i$ or $Q_i$, then $d(s_j,t_j)$ remains unchanged and the statement of claim follows trivially. Suppose the $(s_j,t_j)$ shortest path in $G_i$ uses a vertex in $Y_i$ or $Q_i$. Then we have:  
\begin{align*}
  d(Y_i,s_j)=d(B_{i-2},s_j) \geq d(B_{i-2},V \setminus B_{i-1})\geq 2^{i-1}\\
  d(Q_i,s_j)=d(V \setminus B_{i+2},s_j) \geq d(V \setminus B_{i+2},B_{i+1})\geq 2^{i+1}\\ d(Y_i,t_j)=d(B_{i-2},t_j) \geq d(B_{i-2},V \setminus B_{i-1})\geq 2^{i-1}\\
  d(Q_i,t_j)=d(V \setminus B_{i+2},t_j) \geq d(V \setminus B_{i+2},B_{i+1})\geq 2^{i+1}  
\end{align*}
We have $d_{G_i}(s_j,t_j) \geq \min\{ d(Y_i,s_j)+d(Y_i,t_j), d(Q_i,s_j)+d(Q_i,t_j) \}   \geq 2^{i} \geq \dfrac{d_{G}(s_j,t_j)}{4}$ and the claim follows. 
\end{proof}
We use Theorem \ref{Theorem:KPR} to construct a distribution over (vertex) partitions $\mathcal{P}$ of $G_i$ by setting $\Delta=2^{i-1}$. Suppose the partition is $(c,\Delta)$-lipschitz for some absolute constant $c>0$. We construct a cut metric $\mathcal{C}_i$ using $\mathcal{P}$ as follows: for each partition $P=\{P_1,P_2,\ldots,P_k\} \in \mathcal{P}$ with weight $\mu(P)$, we include $P_1,P_2,\ldots,P_k$ in $\mathcal{C}_i$, each with weight $\dfrac{\mu(P) \cdot \Delta}{c}$. Claim \ref{Claim: distortion for G_i} shows that $\mathcal{C}_i$ preserves distances for pairs in $T_i$.
\begin{claim} \label{Claim: distortion for G_i}
$\mathcal{C}_{i}(u,w) \leq d_{G_i}(u,w)$ for $u,w \in G_i$ and $\mathcal{C}_{i}(s_j,t_j) \geq \dfrac{d_{G_i}(s_j,t_j)}{4 \cdot c}$ for $(s_j,t_j) \in T_i$.
\end{claim}
\begin{proof}
By definition of $(c,\Delta)$-lipschitz partition, probability that $u$ and $w$ are in separate partitions is at most $\dfrac{c \cdot d_{G_i}(u,w)}{\Delta}$ for $u,w \in G_i$. Hence, for $u,w \in G_i$ we have:
\begin{equation*}
   \mathcal{C}_{i}(u,w)= \underset{P \sim \mu}{\mathcal{P}}[P(u) \neq P(w)] \cdot \dfrac{\Delta}{c} \leq \dfrac{c \cdot d_{G_i}(u,w)}{\Delta} \cdot \dfrac{\Delta}{c} \leq d_{G_i}(u,w).
\end{equation*}
For every partition $P \in \mathcal{P}$, $(s_j,t_j)$ are in different subsets of $P$ since $d_{G_i}(s_j,t_j) \geq 2^{i}> \Delta$. This implies that the total weight of cuts in $\mathcal{C}_i$ separating $(s_j,t_j)$ is at least $\dfrac{\Delta}{c}$. Hence,
\begin{equation*}
\mathcal{C}_{i}(s_j,t_j) \geq \dfrac{\Delta}{c} \geq \dfrac{2^{i-1}}{c} \geq \dfrac{d_{G_i}(s_j,t_j)}{4 \cdot c}    
\end{equation*}
The last inequality follows from Claim~\ref{Claim:distrotion due to contraction}. This completes the proof of the claim.
\end{proof}

\begin{claim} \label{Claim:final_distortion_distance_equal}
$\mathcal{C}(u,w) \leq 3 \cdot d_{G}(u,w)$ for $u,w \in G$ and $\mathcal{C}(s_j,t_j) \geq \dfrac{d_{G}(s_j,t_j)}{16 \cdot c}$ for $(s_j,t_j) \in T$, where $\mathcal{C}=\bigcup_{i \geq 0} \mathcal{C}_i$.
\end{claim}
\begin{proof}
By Theorem \ref{Theorem:KPR}, if an edge has length $0$ in $G_i$, then no $C \in \mathcal{C}_i$ contains exactly one end point of the edge. By construction, any edge $(u,w) \in E$ can be a part of at most three $G_i$'s. Hence, using Claim \ref{Claim: distortion for G_i}, we obtain $\mathcal{C}(u,w) \leq 3 \cdot d_{G}(u,w)$. For the second part, observe that:
\begin{equation*}
    \mathcal{C}_{i}(s_j,t_j) \geq \dfrac{d_{G_i}(s_j,t_j)}{4 \cdot c} \geq \dfrac{d_{G}(s_j,t_j)}{4 \cdot 4 \cdot c} = \dfrac{d_{G}(s_j,t_j)}{16 \cdot c}
\end{equation*}
The second and third inequality follow from Claim~\ref{Claim: distortion for G_i} and \ref{Claim:distrotion due to contraction} respectively, and this completes the proof of the claim.
\end{proof}
We have shown that the theorem holds when $d_{G}(v,s_j)=d_{G}(v,t_j)$ for all $(s_j,t_j) \in T$ (more precisely, the statement of the theorem holds for the cut-metric $\mathcal{C}/3$). We prove a more general version of Claim \ref{Claim:final_distortion_distance_equal} which will be useful in proving the theorem. 
\begin{claim} \label{Claim:general_claim_unequal}
$\mathcal{C}(s_j,t_j) \geq \dfrac{d_{G}(s_j,t_j)}{16 \cdot c}-3 \cdot |d_{G}(v,s_j)-d_{G}(v,t_j)|$ for $(s_j,t_j) \in T$.
\end{claim}
\begin{proof}
We may assume that $d_{G}(v,s_j) \geq d_{G}(v,t_j)$ w.l.o.g. Let $s'_j$ be the vertex on a shortest path from $v$ to $s_j$ such that $d_{G}(v,s'_j)=d_{G}(v,t_j)$. By Claim \ref{Claim:final_distortion_distance_equal}, we have $\mathcal{C}(s'_j,t_j) \geq d_{G}(s'_j,t_j)/16c$. All $C \in \mathcal{C}$ which contain exactly one of $s'_j$ and $t_j$ contribute to $\mathcal{C}(s'_j,t_j)$ and can be partitioned into two groups: one in which both $s'_j$ and $s_j$ are in $C$ and the other in which both $s_j$ and $t_j$ are in $C$. Let the sum of weights of the cuts in the two sets formed by this partition be denoted by $\mathcal{C}(s'_js_j,t_j)$ and $\mathcal{C}(s'_j,s_jt_j)$ respectively. Observe that $\mathcal{C}(s'_j,t_j)=\mathcal{C}(s'_js_j,t_j)+ \mathcal{C}(s'_j,s_jt_j)$ and $\mathcal{C}(s'_js_j,t_j) \leq \mathcal{C}(s_j,t_j)$. Hence $\mathcal{C}(s_j,t_j) \geq \mathcal{C}(s'_j,t_j)-\mathcal{C}(s'_j,s_jt_j)$. By Claim \ref{Claim:final_distortion_distance_equal}, we have,
\begin{equation*}
\mathcal{C}(s'_j,s_jt_j) \leq \mathcal{C}(s'_j,s_j) \leq 3d_{G}(s'_j,s_j)=3|d_{G}(v,s_j)-d_{G}(v,t_j)|    
\end{equation*}
Using $\mathcal{C}(s'_j,t_j) \geq \dfrac{d_{G}(s'_j,t_j)}{16 \cdot c}$, we obtain: 
\begin{equation*}
\mathcal{C}(s_j,t_j) \geq \mathcal{C}(s'_j,t_j)-3|d_{G}(v,s_j)-d_{G}(v,t_j)| \geq \dfrac{d_{G}(s_j,t_j)}{16 \cdot c}-3|d_{G}(v,s_j)-d_{G}(v,t_j)|. 
\end{equation*} 
This completes the proof of the claim. 
\end{proof}
We augment $\mathcal{C}$ by the following single source cuts: let $V=\{v_1,v_2,\ldots,v_n\}$ such that $d_G(v,v_1) \leq d_G(v,v_2)\leq \ldots \leq d_G(v,v_n)$. Let $R_i=\{v_1,v_2,\ldots,v_i\}$ and $w(R_i)=3 \cdot (d_{G}(v_{i+1})-d_{G}(v_{i}))$ for $i=1,2,3,\ldots,n$. Let $\mathcal{R}=\{(R_1,w(R_1)),\ldots,(R_n,w(R_n))\}$ and  $\mathcal{C}'=\mathcal{C} \cup \mathcal{R}$. 
\begin{claim}
$\mathcal{C}'(u,w) \leq 6 \cdot d_{G}(u,w)$ for $u,w \in G$ and $\mathcal{C}'(s_j,t_j) \geq \dfrac{d_{G}(s_j,t_j)}{16 \cdot c}$ for $(s_j,t_j) \in T$.
\end{claim}
\begin{proof}
Observe that $\mathcal{R}(u,w)=3 \cdot (d_{G}(v,u)-d_{G}(v,w)) \leq 3 \cdot d_{G}(u,w)$. Using Claim \ref{Claim:general_claim_unequal}, for any $u,w \in G$ we have, $\mathcal{C}'(u,w)=\mathcal{C}(u,w) + \mathcal{R}(u,w) \leq 3 \cdot d_{G}(u,w)+ 3 \cdot d_{G}(u,w)=6 \cdot d_{G}(u,w)$. Using Claim \ref{Claim:general_claim_unequal}, for any $(s_j,t_j) \in T$ we have: 
\begin{equation*}
 \mathcal{C}'(s_j,t_j)=\mathcal{C}(s_j,t_j) + \mathcal{R}(s_j,t_j)\geq \dfrac{d_{G}(s_j,t_j)}{16 \cdot c}-3|d_{G}(v,s_j)-d_{G}(v,t_j)|+3|d_{G}(v,s_j)-d_{G}(v,t_j)|=\dfrac{d_{G}(s_j,t_j)}{16 \cdot c}   
\end{equation*}
This completes the proof of the claim.
\end{proof}
Let $\mathcal{C''}=\mathcal{C'}/6$. Then $\mathcal{C''}(u,w) \leq d_{G}(u,w)$ for $u,w \in G$ and $\mathcal{C''}(s_j,t_j) \geq \dfrac{d_{G}(s_j,t_j)}{96 \cdot c}$ for $(s_j,t_j) \in T$. By using the equivalence between the cut-metric and $L_1$-embedding, we have the desired $g : V \rightarrow L_1$ with $\beta=96 \cdot c$.
\end{proof}
\section{Constrained Embedding} \label{Section:constrained_embedding}
Let $G=(V,E)$ be a planar graph and $f$ be its infinite face. Let $V(f)=\{v_1,v_2,\ldots,v_{r}=v_1\}$ be the vertices on the cycle defined by $f$ in clockwise order. Suppose that we are given a cut-metric $\mathcal{C}=\{(C_1,w_1),(C_2,w_2),\ldots,(C_m,w_m)\}$ w.r.t $V(f)$ such that each cut $C_i \subseteq V(f)$ corresponds to a contiguous subset of vertices on the cycle defined by $f$. We say that $\mathcal{C}$ is a \textbf{cut-metric w.r.t $f$}. Suppose that we wish to extend $\mathcal{C}$ to $V$, i.e.~we wish to find a cut-metric $\mathcal{D}$ w.r.t $V$ such that $\mathcal{D}= \cup_{i=1}^{m}\mathcal{D}_i$, $\mathcal{D}_i=\{(D^1_i, w^1_i), (D^2_i,w^2_i),\ldots, (D^k_i,w^{k_i}_i)\}$, $\sum_{j=1}^{k_i}w^j_i=w_i$, $C_i \subseteq D^j_i$, $(V(f) \setminus C_i)  \cap D^j_i =\emptyset$ for $i=1,2,\ldots,m$, and $\mathcal{D}(u,v) \leq d_{G}(u,v)$ for $(u,v) \in E$. We call $\mathcal{D}$ an \textbf{extension} of $\mathcal{C}$ to $G$ w.r.t face $f$. Lemma \ref{Lemma:extend_face_to_rest} shows that this is always possible if $\mathcal{C}(u,v) \leq d_{G}(u,v)$ for $u,v \in V(f)$. Note that by definition of $\mathcal{D}$, it follows that $\mathcal{D}(u,v)=\mathcal{C}(u,v)$ for all $u,v \in V(f)$.  
\begin{lemma} \label{Lemma:extend_face_to_rest}
Let $G=(V,E)$ be a planar graph and $f$ be its infinite face. Let $\mathcal{C}$ be a cut-metric w.r.t face $f$ such that $\mathcal{C}(u,v) \leq d_{G}(u,v)$ for all $u,v \in f$ and $C_i \in \mathcal{C}$ corresponds to a contiguous set of vertices on $f$. Then there exists an extension $\mathcal{D}$ of $\mathcal{C}$ to $G$ w.r.t $f$.
\end{lemma}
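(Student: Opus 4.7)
The plan is to reformulate the extension problem as a fractional multicommodity flow in the planar dual and invoke Okamura--Seymour. For each cut $C_i \in \mathcal{C}$, let $e_i^1, e_i^2$ be the two edges of the cycle of $f$ bounding the contiguous arc $C_i$. By planar duality, a central extension $D \subseteq V$ of $C_i$ (i.e., $D \cap V(f) = C_i$ with $G[D]$ and $G[V \setminus D]$ both connected) corresponds bijectively to a simple path in the internal dual $G^{\star} := G^{D} - \{f^{*}\}$ between the internal faces $s_i, t_i$ incident to $e_i^1, e_i^2$ respectively. The capacity constraint on a non-boundary primal edge $e$ translates to the flow capacity $l(e)$ on the corresponding dual edge $e^{*} \in E(G^{\star})$; the constraint on a boundary edge $(v_k, v_{k+1})$ becomes $\delta_{\mathcal{C}}(v_k, v_{k+1}) \leq l(v_k, v_{k+1})$, which is immediate from the hypothesis since $d_G(v_k, v_{k+1}) \leq l(v_k, v_{k+1})$. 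Thus producing $\mathcal{D}$ reduces to fractionally routing $w_i$ units from $s_i$ to $t_i$ in $G^{\star}$ for each $i$ under edge capacities $l$.

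All terminals $s_i, t_i$ are internal faces of $G$ adjacent to the outer cycle, so in the embedding of $G^{\star}$ inherited from $G^{D}$ they all lie on the outer face. By Theorem \ref{OS} (or equivalently by Theorem \ref{OS_embedding} combined with the congestion--distortion theorem), this multicommodity flow is feasible iff the cut condition holds in $G^{\star}$. By Lemma \ref{central}, it suffices to check central cuts, and each central cut of $G^{\star}$ corresponds to a simple cycle $\gamma$ in $(G^{\star})^{D}$, which a short duality computation identifies with $G/V(f)$ (contract $V(f)$ to a single vertex $v_f$ and drop self-loops). I split into three cases. (i) If $\gamma$ avoids $v_f$, it lies in $G[V \setminus V(f)]$ and does not touch the cycle of $f$; by planarity, no boundary edge of $f$ can lie inside such a closed curve, hence no boundary face of $G^{\star}$ is on the ``inside'' side and no demand pair is separated. (ii) If $\gamma$ passes through $v_f$ and its two incident edges lift to primal edges incident to the \emph{same} vertex $v \in V(f)$, then $\gamma$ unfolds to a cycle $v \to u_1 \to \cdots \to u_k \to v$ in $G$ touching $V(f)$ only at $v$, and the same planarity argument again gives demand $0$. (iii) Otherwise $\gamma$ unfolds to a path $P$ in $G$ between distinct vertices $p, q \in V(f)$ through the interior, and the capacity of the corresponding cut equals the $l$-length of $P$, which is at least $d_G(p,q)$. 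A parity argument along the cycle of $f$ using the contiguity of each $C_i$ shows that the cut separates $s_i$ from $t_i$ iff $\{p,q\}$ separates $\{e_i^1, e_i^2\}$ on $f$ iff $C_i$ separates $p$ from $q$; therefore the demand across the cut equals $\delta_{\mathcal{C}}(p,q) \leq d_G(p,q)$ by hypothesis, verifying the cut condition.

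Hence Okamura--Seymour produces the desired fractional flow in $G^{\star}$, and a standard path decomposition of this flow together with the bijection of the first paragraph yields the weighted collection $\mathcal{D}_i = \{(D_i^j, w_i^j)\}$, whose union $\mathcal{D}$ is the required extension. The main technical difficulty is case (iii) of the cut-condition check, where one must carefully translate a central cut of $G^{\star}$ to a primal path and use the parity/contiguity structure of the $C_i$'s on $f$ to identify the demand across the cut with $\delta_{\mathcal{C}}(p,q)$; cases (i) and (ii) look distinct but both reduce to a single planar observation that a closed curve in $G$ meeting $V(f)$ in at most one point cannot enclose any boundary edge of $f$.
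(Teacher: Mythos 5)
Your proof is correct and follows essentially the same route as the paper: dualize, pose a multicommodity flow with all terminals on the outer face, check the cut condition via central cuts, invoke Okamura--Seymour, and translate flow paths back to primal cuts. The only cosmetic differences are that you work in $G^{\star}=G^{D}-\{f^{*}\}$ with internal boundary faces as terminals rather than splitting $f^{D}$ into degree-one leaves as the paper does, and your three-case analysis (via the identification $(G^{\star})^{D}\cong G/V(f)$) is a more explicit version of the paper's terse assertion that ``each central cut corresponds to a $(v_i,v_j)$ path in $G$.''
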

\begin{proof}
We set up a multicommodity flow instance in the planar dual of $G$ such that all the sink-source pairs are on the infinite face and the cut-condition is satisfied. We use Theorem \ref{OS} to find a feasible flow and the fact that (simple) cycles in a planar graph correspond to (central) cuts in the planar dual to finish the proof.
Let $\mathcal{C}=\{(C_1,w_1),(C_2,w_2),\ldots,(C_m,w_m)\}$, $V(f)=\{v_1,v_2,\ldots,v_r=v_1\}$ be the vertices on cycle of $f$ and $e_i=(v_i,v_{i+1}), 1 \leq i \leq r-1$ be the edges of $f$. Let $G^{D}=(V^{D},E^{D})$ be the planar dual of $G$ and $f^D$ be the dual vertex corresponding to the infinite face $f$. For each $e^D=(u^D,v^D) \in E^D$, we set the capacity of edge $e^D$ as $d_G(u,v)$, i.e.~$c(e^D):=d_G(u,v)$. Let $e^{D}_{1},\ldots, e^{D}_{r-1}$ be the edges incident on $f^D$ in $G^D$. We split the vertex $f^D$ into $r-1$ vertices $f^D_{1},\ldots, f^{D}_{r-1}$ such that $e^{D}_i$ is the only edge incident on vertex $f^{D}_i$. We overload the notation and use $G^D$ to denote this modified graph as well, and jump between the two implicitly when the context is clear. Note that each of the vertices $f^{D}_i$ lie on a single face of $G^D$.

By theorem statement, each $C_i \in \mathcal{C}$ is a contiguous subset of vertices on $f$. Hence, each $C_i \in \mathcal{C}$ is incident on exactly two edges on $f$, say $e_j,e_k$. We set up a multicommodity flow instance in $G^D$ as follows: for each $C_i \in \mathcal{C}$, we introduce a demand edge $(f^D_j,f^D_k)$ with demand value $w_i$. By Lemma \ref{central}, to check that the cut-condition is satisfied for the instance, we only need to verify it for central cuts. Recall that a cut $S$ is central if both $G[S]$ and $G[V \setminus S]$ are connected. Recall that each central cut in $G^D$ corresponds to a simple cycle in $G$. Also, we need to check the cut-condition only for central-cuts in $G^D$ which separate some non-zero demand. All such central cuts in $G^D$ are incident to the face containing all the demands, and hence we can conclude that each central cut in $G^D$ corresponds to a $(v_i,v_j)$ path in $G$ for some $1 \leq i <j \leq r-1$. Total capacity of the supply edges across such a cut is at least the length of shortest path between $(v_i,v_j)$ in $G$, i.e.~$d_{G}(v_i,v_j)$. The total demand across such a cut is equal to $\mathcal{C}(v_i,v_j)$. Since $\mathcal{C}(v_i,v_j) \leq d_{G}(v_i,v_j)$ for each $v_i,v_j \in V(f)$, the cut-condition is satisfied and we have a feasible flow satisfying all the demands. Each flow path in $G^D$ corresponds to a set of edges in $G$, which in turn correspond to a cut in $G$ (recall that a cycle in a planar graph corresponds to a cut in its (planar) dual). Let $\mathcal{D}$ be the set of cuts defined by a feasible solution to the multicommodity flow instance. Since the total flow through any edge $e^D=(u^D,v^D)$ in $G^D$ is at most $c(u^D,v^D)=d_G(u,v)$, the total weight of cuts in $\mathcal{D}$ separating $e=(u,v)$ is at most $d_G(u,v)$. Hence $\mathcal{D}$ is a valid extension of $\mathcal{C}$ and this completes the proof of the lemma.   
\end{proof}
\begin{theorem} \label{Theorem:extend_the_embedding}
Let $G=(V,E)$ be a planar graph with edge-length $l:E \rightarrow \mathbf{R}_{\geq 0}$, $F$ be its set of faces and $\alpha \geq 12 \beta$ be an absolute constant \footnote{$\beta$ is the constant from Theorem \ref{Theorem: single_source_embedding}}. Furthermore, let $S$ be an $\alpha$-loose cycle such that $f_2 \in F$ is the unique non-geodesic face contained inside $R(S)$. Let $F_1$ be the set of faces of $G[R(S)]$. Let $\mathcal{C}$ be a cut-metric w.r.t $S$ such that $d_{G}(u,v) / \alpha \leq \mathcal{C}(u,v) \leq d_{G}(u,v)$ for $u,v \in S$. Then there exists an extension of $\mathcal{C}$ to $G[R(S)]$, say $\mathcal{Z}$, such that $d_{G}(u,v)/ \alpha \leq \mathcal{Z}(u,v) \leq d_{G}(u,v)$ for all $u,v \in f \in F_1$.
\end{theorem}

\begin{proof}
We abuse notation and use $S$ to denote the set of vertices and edges incident on the cycle $S$. Consider two copies of $G[R(S)]=(V_1,E_1)$, say $H_1$ and $H_2$ with length functions $l_1$ and $l_2$ defined as follows: $l_1(u,v):= l(u,v)$ for all $(u,v) \in S$ and $l(u,v)/\alpha$ otherwise; $l_2(u,v):= 0$ for all $(u,v) \in S$ and $l(u,v)$ otherwise. 

By the definition of an $\alpha$-loose cycle, any $u,v$ path (with at least 3 vertices) $P[u,v] \subseteq R(S)$ such that $P[u,v] \cap S=\{u,v\}$ has length at least $\alpha \cdot d_G(u,v)$. The shortest $u,v$ path might go inside and outside of $R(S)$ multiple times. We break this path into sub-paths whose interior vertices are completely contained inside or outside of $R(S)$ respectively. We then use the fact that $S$ is an $\alpha$-loose cycle and the length of edges with at least one end-point inside $R^{\circ}(S)$ have been scaled down by a multiplicative factor of $\alpha$ in $H_1$ to conclude that $d_{H_1}(u,v) \geq d_G(u,v)$ for any $u,v \in S$. Lemma \ref{Lemma:extend_face_to_rest} shows that the cuts in $\mathcal{C}$ can be extended to $H_1$ such that $\mathcal{C}(u,v) \leq d_{H_1}(u,v)$ for each $(u,v) \in E_1$. Let this cut-metric be $\mathcal{C}'$. We create an (equivalent) $L_1$ embedding $h:V_1 \rightarrow L_1$ from $\mathcal{C}'$ by forming a new coordinate for each $C \in \mathcal{C}'$ and setting $h(u)=0$ if $u \in C$ and $w_C$ otherwise.

We now consider $H_2$ with edge length $l_2$. Since the length of all the edges in $S$ have been set to zero in $l_2$, we may treat all the vertices on the cycle $S$ as a single node, say $v_S$. If $d_{H_2}(u,v) \neq d_G(u,v)$ for some $u,v \in V_1$, then it must be the case that $d_{H_2}(u,v) = d_{H_2}(u,v_S)+d_{H_2}(v_S,u)$. Let $T_1$ be the set of all pairs of vertices $(u,v)$ in $H_2$ such that the shortest path between $u$ and $v$ in $H_2$ uses the vertex $v_S$. We use Theorem \ref{Theorem: single_source_embedding} to find an embedding $g_1:V_1 \rightarrow L_1$ such that $d_{H_2}(u,v)/\beta \leq ||g_1(u)-g_1(v)|| \leq d_{H_2}(u,v)$ for all $(u,v) \in T_1$.

Let $T_2$ be the set of all geodesic pairs in $H_2$ (see Section \ref{Section: embed_geodesic_pairs} for the definition of geodesic pairs). In this case, we use Theorem \ref{Theorem: embed_geodesic_pairs} to find an embedding $g_2:V_1 \rightarrow L_1$ such that $d_{H_2}(u,v)/21 \leq ||g_2(u)-g_2(v)|| \leq d_{H_2}(u,v)$ for all $(u,v) \in T_2$. 

Let $T_3$ be the set of all pairs of vertices $(u,v)$ such that $u,v \in f_2$ i.e.~the set of all pairs of vertices on the non-geodesic face $f_2$. We use Theorem \ref{OS_embedding} to find an embedding $g_3:V_1 \rightarrow L_1$ such that  $||g_3(u)-g_3(v)||=d_{H_2}(u,v)$ for all $(u,v) \in T_3$ and  $||g_3(u)-g_3(v)|| \leq d_{H_2}(u,v)$ for all $(u,v) \in E_1$. Let $g=(g_1+g_2+g_3)/3$. Since $||g_i(u)-g_i(v)|| \leq d_{H_2}(u,v)$ for $i=1,2,3$, we have $||g(u)-g(v)|| \leq d_{H_2}(u,v)$ for $(u,v) \in E_1$. 

Let $T$ be the set of all pair of vertices $u,v$ which lie on the same face, i.e.~$T=\{ (u,v)|u,v \in f  \text{ for some} f \in F_1\}$. Since $f_2$ is the unique non-geodesic face in $F_1 \setminus S$, for any $u,v \in T$, one of the following must hold: (i) any shortest path between $u,v$ in $H_2$ uses the vertex $v_S$ (ii) $(u,v)$ is a geodesic pair (iii) $u,v \in f_2$. Hence, $T \subseteq T_1 \cup T_2 \cup T_3$ and we have  $||g(u)-g(v)|| \geq d_{H_2}(u,v)/\beta_1$ where $\beta_1 = \max \{3 \cdot \beta,3 \cdot 21, 3 \cdot 1\}=3 \beta$. Let $z:= h + \dfrac{\alpha -1 }{\alpha} \cdot g$. 
\begin{claim}
$||z(u)-z(v) || \leq l(u,v)$ for $(u,v) \in E_1$. 
\end{claim}
\begin{proof}
Consider an edge $(u,v) \in S$. Since $l_1(u,v)=l(u,v)$ and $l_2(u,v)=0$ we have:
\begin{equation*}
 ||z(u)-z(v)||=||h(u)-h(v)|| +\dfrac{\alpha-1}{\alpha} \cdot ||g(u)-g(v)|| \leq l_1(u,v)+\dfrac{\alpha-1}{\alpha}  \cdot l_2(u,v)=l(u,v)   
\end{equation*}
Now consider an edge $(u,v) \in E_1 \setminus S$. Since $l_1(u,v)=\dfrac{l(u,v)}{\alpha}$ and $l_2(u,v)= l(u,v)$,
\begin{equation*}
||z(u)-z(v)||=||h(u)-h(v)|| +\dfrac{\alpha-1}{\alpha}\cdot ||g(u)-g(v)||\leq \dfrac{1}{\alpha} \cdot l(u,v)+\dfrac{\alpha-1}{\alpha } \cdot l(u,v)=l(u,v).
\end{equation*}
\end{proof}

\begin{claim}
$||z(u)-z(v) || \geq \dfrac{d_G(u,v)}{\alpha}$ for $(u,v) \in T$.
\end{claim}
\begin{proof}
Let $(u,v) \in T$. We consider two cases depending on whether the shortest $u,v$ path in $H_2$ uses the vertex $v_S$. If the shortest $u,v$ path doesn't use the vertex $v_S$, then we have $d_{G}(u,v)=d_{H_2}(u,v)$ and,
\begin{equation*}
 ||z(u)-z(v)|| \geq \dfrac{ \alpha-1}{\alpha} \cdot ||g(u)-g(v)|| \geq \dfrac{1}{2} \cdot \dfrac{d_{H_2}(u,v)}{3 \beta} \geq \dfrac{d_G(u,v)}{\alpha}   
\end{equation*}
Suppose that the shortest $u,v$ path uses the vertex $v_S$. Recall that $v_S$ was formed by identifying all the vertices in $S$ as a single vertex. We uncontract $v_S$ and let $u-u_1-v_1-v$ be the shortest $u,v$ path where $u_1,v_1 \in S$ are the first and last vertices of $S$ on the shortest path. Note that the shortest $u_1$-$v_1$ path may contain some of the vertices in $G \setminus G_1$. Since $\mathcal{C}(u_1,v_1) \geq d_G(u_1,v_1)/\alpha$, we have, 
\begin{equation*}
    ||h(u)-h(v)|| \geq \dfrac{d_{G}(u_1,v_1)}{\alpha}-d_{H_1}(u_1,u)-d_{H_1}(v_1,v).
\end{equation*} 
Using the fact that $\alpha \geq 12 \beta=4 \beta_1$, we have: \\

$||z(u)-z(v)|| \geq \dfrac{d_{G}(u_1,v_1)}{\alpha}-d_{H_1}(u_1,u)-d_{H_1}(v_1,v)+\dfrac{\alpha-1}{\alpha} \cdot \dfrac{d_{H_2}(u,u_1)+d_{H_2}(v_1,v)}{\beta_1}$\\

$ \hspace{27 mm} \geq \dfrac{d_{G}(u_1,v_1)-d_{G}(u_1,u)-d_{G}(v_1,v)}{\alpha}+ \dfrac{1}{2} \cdot \dfrac{d_{G}(u,u_1)+d_{G}(v_1,v)}{\beta_1}$\\

$\hspace{27 mm} \geq \dfrac{d_{G}(u,u_1)+d_{G}(u_1,v_1)+d_{G}(v_1,v)}{\alpha}$\\

$\hspace{27 mm} = \dfrac{d_{G}(u,v)}{\alpha}$. \\

To go from the fist equation to the second, we have used $d_{H_1}(u_1,u)= d_{G}(u_1,u)/\alpha$ and $d_{H_1}(v_1,v)= d_{G}(v_1,v)/\alpha$. This follows from the fact that we scale down the lengths of edges with atleast one end-point inside $R^{\circ}(S)$ by a multiplicative factor of $\alpha$ in $H_1$ and the shortest paths between $(u_1,u)$ and $(v_1,v)$ use only such edges.
\end{proof}
Using the equivalence of cut-metric and $L_1$-embedding, we can construct a cut-metric $\mathcal{Z}$ from $z: V_1 \rightarrow L_1$ satisfying the conditions of the theorem. 
\end{proof}
\section{Putting Everything Together}\label{Section:final}
\begin{theorem} \label{thm:almost_final}
 Let $G=(V,E)$ be a planar graph with an $\alpha$-good edge-length $l:E \rightarrow \mathbb{R}_{\geq 0}$. Let $F$ be its set of faces and $T=\{(u,v)|u,v \in f \in F\}$. Then there exists a polynomial time computable $z: V \rightarrow L_1$ such that $||z(u)-z(v)|| \leq l(u,v)$ for $(u,v) \in E$ and $||z(u)-z(v)|| \geq d_{G}(u,v)/\alpha$ for $(u,v) \in T$.
\end{theorem}
\begin{proof}
We prove the theorem by using induction on the number of vertices. If there are no non-geodesic faces, then we use Theorem \ref{Theorem: embed_geodesic_pairs} to get an $L_1$ embedding with distortion at most 21. If there exists a non-geodesic face, we first embed $G$ in the plane such that the infinite face is geodesic (such a face always exists). By Lemma~\ref{Lemma: face_support_laminar_final}, there exists a non-geodesic face $f$ such that for any non-geodesic face $f' \neq f$, either $R(S_f^{\infty}) \subset R(S_{f'}^{\infty})$ or $R^{\circ}(S_f^{\infty}) \cap R^{\circ}(S_{f'}^{\infty})=\emptyset$. Let $G_1=(V_1,E_1)= G \setminus R^{\circ}(S_f^\infty)$ and $G_2 = (V_2,E_2) = G[R(S_f^\infty)]$. By the definition of $\alpha$-good length function, $l$ is $\alpha$-good for $G_1$ as well.

Since $f$ is non-geodesic, there exists a vertex in $R^{\circ}(S^\infty_f)$. Hence, the number of vertices in $G_1$ is strictly smaller than $G$ and we can use induction. By induction, there exists an embedding $z_1:V_1 \rightarrow L_1$ satisfying the conditions of the theorem, i.e.~$||z_1(u)-z_1(v)|| \leq l(u,v)$ for $(u,v) \in E_1$ and $||z_1(u)-z_1(v)|| \geq d_{G_1}(u,v)/\alpha$ for $(u,v) \in T_1$, where $T_1$ is the set of pair of vertices on the same face in $G_1$. Using the equivalence between $L_1$-embedding and cut-metric, we compute a cut-metric equivalent to $z_1$, say $\mathcal{Z}_1$. 

Let $\mathcal{Z}_1^{f}$ be the cut-metric induced by $\mathcal{Z}_1$ on the cycle $S^{\infty}_f$. We use Theorem \ref{Theorem:extend_the_embedding} to compute a cut-metric which extends $\mathcal{Z}_1^{f}$ to vertices in $G_2$, say $\mathcal{Z}_f$. We obtain the final cut-metric by setting $\mathcal{Z}=(\mathcal{Z}_1 \setminus \mathcal{Z}_1^{f}) \cup \mathcal{Z}_f$. Let $z: V \rightarrow L_1$ be the equivalent embedding to $\mathcal{Z}$. By Theorem \ref{Theorem:extend_the_embedding}, $||z(u)-z(v)|| \leq l(u,v)$ for $(u,v) \in E$ and $||z(u)-z(v)|| \geq d_G(u,v)/ \alpha$ for $(u,v) \in T$.
\end{proof}

\begin{theorem}
Let $G=(V,E)$ be a planar graph with edge-lengths $l:E \rightarrow \mathbb{R}_{\geq 0}$. Let $F$ be the set of its faces and $T=\{(u,v)|u,v \in f \in F\}$. Then there exists a polynomial time computable $z: V \rightarrow L_1$ such that $||z(u)-z(v)|| \leq l(u,v)$ for $(u,v) \in E$ and $||z(u)-z(v)|| \geq d_{G}(u,v)/ \alpha^{2}$ for $(u,v) \in T$.
\end{theorem}
\begin{proof}
We first compute an $\alpha$-good length function $l'$ by using Theorem \ref{Theorem:alpha_good_decomposition}. We then use Theorem \ref{thm:almost_final} to compute $z: V \rightarrow L_1$ w.r.t $l'$. The statement of the theorem follows by noting that $l'$ is created from $l$ by reducing the length of each edge by a multiplicative factor of at most $\alpha$.
\end{proof}

\section{Conclusions} \label{Section:conclusions}
In this paper, we proved a $\mathcal{O}(1)$ flow-cut gap when $G$ is planar and both end points of every demand edge are incident on one of the faces. Although our result does not directly imply any bounds on the (half)-integral flow-cut gap, we believe that it should be possible to exploit the laminar structure of flows in such instances to prove such a bound. Inductive arguments have been used successfully for proving better flow-cut gaps for planar instances, for example series-parallel graphs \cite{chakrabarti2008embeddings} and $k$-outer planar graphs \cite{chekuri2006embedding}. We believe that the techniques developed in this paper could be useful for extending such an approach to a more general setting. Our current bound on the flow-cut gap for instances considered in this paper are quite large ($ \sim 10^8$), but the best lower bound that we know of is 4/3. It is an interesting open question to close this gap. 


\bibstyle{main}
\bibliography{main.bib}

\end{document}